\setlist[enumerate]{leftmargin=*}\usepackage{longtable}
\definecolor{linkColor}{rgb}{0.,0.11,0.22}
\definecolor{YaleBlue}{rgb}{0.0,0.22,0.444}
\newcommand{\cmark}{\ding{52}}%
\newcommand{\xmark}{\ding{56}}%
\newcolumntype{Y}{>{\centering\arraybackslash}X}
\newtheorem{thm}{Theorem}[section]
\newtheorem{defi}[thm]{Definition}
\definecolor{codegreen}{rgb}{0,0.6,0}
\definecolor{codegray}{rgb}{0.5,0.5,0.5}
\definecolor{codepurple}{rgb}{0.58,0,0.82}
\definecolor{backcolour}{rgb}{0.97,0.97,0.97}
\definecolor{darkbackcolour}{rgb}{0.875,0.875,0.875}
\newcommand{\code}[1]{\colorbox{darkbackcolour}{\texttt{#1}}}
\newcommand*{\garamond}{\fontfamily{EBGaramond-OsF}\selectfont}
\newlength\longest
\newlength\nlongest
\DeclareMathAlphabet{\mathdutchcal}{U}{dutchcal}{mb}{n}
\pgfplotsset{compat=1.15}
\definecolor{ccwwqq}{rgb}{0.8,0.4,0}
\definecolor{qqwwzz}{rgb}{0,0.4,0.6}
\definecolor{ffffff}{rgb}{1,1,1}
\definecolor{wqwqwq}{rgb}{0.3764705882352941,0.3764705882352941,0.3764705882352941}
\definecolor{yqyqyq}{rgb}{0.5019607843137255,0.5019607843137255,0.5019607843137255}
\definecolor{ttzzqq}{rgb}{0.2,0.6,0}
\definecolor{xdxdff}{rgb}{0.49019607843137253,0.49019607843137253,1}
\definecolor{ududff}{rgb}{0.30196078431372547,0.30196078431372547,1}
\definecolor{zzttqq}{rgb}{0.6,0.2,0}
\definecolor{qqwuqq}{rgb}{0,0.39215686274509803,0}
\definecolor{qqttcc}{rgb}{0,0.2,0.8}
\definecolor{ccxzax}{rgb}{0.8,0.4745098039215686,0.6549019607843137}
\definecolor{ewzfqq}{rgb}{0.9019607843137255,0.6235294117647059,0}
\definecolor{vwbuez}{rgb}{0.33725490196078434,0.7058823529411765,0.9137254901960784}
\definecolor{codegreen}{rgb}{0,0.6,0}
\definecolor{codegray}{rgb}{0.5,0.5,0.5}
\definecolor{codepurple}{rgb}{0.58,0,0.82}
\definecolor{backcolour}{rgb}{1,1,1}
\definecolor{OliveGreen}{rgb}{0.122,0.53,0}
\lstdefinestyle{customC}{
    language=C,
    backgroundcolor=\color{backcolour},   
    commentstyle=\color{codegreen},
    keywordstyle=\color{teal},
    stringstyle=\color{codepurple},
    basicstyle=\ttfamily\scriptsize,
    breakatwhitespace=false,
    xleftmargin=12pt,
    numbers=none,
    numbers=left,
    numbersep=4pt,
    upquote=true
    frame=lines,
    breaklines=true,                 
    captionpos=b,                    
    keepspaces=true,
    showspaces=false,                
    showstringspaces=false,
    showtabs=false,
    tabsize=2,
    morekeywords={bool},
    framexleftmargin=0mm,
}
\lstdefinestyle{customPy}{
    language=Python,
    backgroundcolor=\color{backcolour},   
    commentstyle=\color{codegreen},
    keywordstyle=\color{teal},
    stringstyle=\color{codepurple},
    basicstyle=\ttfamily\scriptsize,
    breakatwhitespace=false,
    xleftmargin=12pt,
    numbers=none,
    numbers=left,
    numbersep=4pt,
    upquote=true,
    frame=lines,
    breaklines=true,                 
    captionpos=b,                    
    keepspaces=true,
    showspaces=false,                
    showstringspaces=false,
    showtabs=false,
    tabsize=2,
    framexleftmargin=0mm,
    morekeywords={True, False},
    deletekeywords={from},
    morestring=[s][\color{OliveGreen}]{@soid}{register}
}
\lstdefinestyle{tinyPy}{
    language=Python,
    backgroundcolor=\color{backcolour},   
    commentstyle=\color{codegreen},
    keywordstyle=\color{teal},
    stringstyle=\color{codepurple},
    basicstyle=\ttfamily\tiny,
    breakatwhitespace=false,
    xleftmargin=12pt,
    numbers=none,
    numbers=left,
    numbersep=4pt,
    upquote=true,
    frame=lines,
    breaklines=true,                 
    captionpos=b,                    
    keepspaces=true,
    showspaces=false,                
    showstringspaces=false,
    showtabs=false,
    tabsize=2,
    framexleftmargin=0mm,
    morekeywords={True, False},
    deletekeywords={from},
    morestring=[s][\color{OliveGreen}]{@soid}{register}
}
\newcommand{\AC}[1]{{#1}}
\newcommand{\RA}[1]{{#1}}
\newcommand{\RB}[1]{{#1}}
\newcommand{\RC}[1]{{#1}}
\newenvironment{caquote}%
  {\list{}{\leftmargin=0.65in\rightmargin=0.65in}\item[]}%
  {\endlist}
\newcommand{\counterfactual}{\ensuremath{%
    \mathrel{\Box\kern-1.5pt\raise0.8pt\hbox{\vspace{10pt}$\mathord{\rightarrow}$}}}}
\newcommand{\fprop}{\ensuremath{\to_{\mathcal{A}, t, \ell}}}
\newcommand{\cprop}{\ensuremath{\counterfactual_{\mathcal{A}, t^*, \ell}}}
\newcommand{\rsigma}{\left.\sigma\right|}
\newcommand{\soid}{$\textsf{soid}$}
\newcommand{\longname}{Counterfactual-Guided Logic Exploration and Abstraction Refinement}
\newcommand{\acronym}{CLEAR}
\begin{document}

\newlength{\blob}
\settowidth{\blob}{Graz University of Technology}
\title{\vspace{-15mm}{\huge`Put the Car on the Stand': SMT-based Oracles for Investigating Decisions}}
\author{
\begin{tabular}{c c c}
Samuel Judson & Matthew Elacqua & Filip Cano    \\
\makebox[\blob][c]{Yale University} & \makebox[\blob][c]{Yale University} & Graz University of Technology \\
{\footnotesize\url{samuel.judson@yale.edu}} & 
{\footnotesize\url{matt.elacqua@yale.edu}} &
{\footnotesize\url{filip.cano@iaik.tugraz.at}} 
\end{tabular}
\\ \\
\begin{tabular}{c c}
Timos Antonopoulos & Bettina K\"onighofer \\
Yale University & Graz University of Technology  \\
{\footnotesize\url{timos.antonopoulos@yale.edu}} &
{\footnotesize\url{bettina.koenighofer@iaik.tugraz.at}} \\
& \\
Scott J. Shapiro & Ruzica Piskac \\
Yale Law School \& Yale University & Yale University \\
{\footnotesize\url{scott.shapiro@yale.edu}} &
{\footnotesize\url{ruzica.piskac@yale.edu}}
\end{tabular}
}
\date{}
\maketitle
\begin{abstract}
\noindent Principled accountability in the aftermath of harms is essential to the trustworthy design and governance of algorithmic decision making. Legal theory offers a paramount method for assessing culpability: putting the agent `on the stand' to subject their actions and intentions to cross-examination. We show that under minimal assumptions automated reasoning can rigorously interrogate algorithmic behaviors as in the adversarial process of legal fact finding. We model accountability processes, such as trials or review boards, as \longname{} (\acronym{}) loops.
We use the formal methods of symbolic execution and satisfiability modulo theories (SMT) solving to discharge queries about agent behavior in factual and counterfactual scenarios, as adaptively formulated by a human investigator. In order to do so, for a decision algorithm $\mathcal{A}$ we use symbolic execution to represent its logic as a statement $\Pi$ in the decidable theory \texttt{QF\_FPBV}. 
We implement our framework 
and demonstrate its utility on an illustrative car crash scenario.
\end{abstract}

\section{Introduction}\label{sec:intro}

Our lives are increasingly impacted by the automated decision making of AI. We share roads with autonomous vehicles, as healthcare providers use algorithms to diagnose diseases and prepare treatment plans, employers to automate hiring screens, and even judges to analyze flight and recidivism risks. Though the creators of AI often intend it to improve human welfare, it is a harsh reality that algorithms often fail. Automated decision makers (ADMs) are now deployed into roles of immense social responsibility even as their nature means they are not now, and likely will never be, trustworthy enough to do no harm. When autonomous vehicles drive on open roads they cause fatal accidents~\cite{smiley2022m}. Classification and scoring algorithms perpetuate race- and gender-based biases in hiring and recidivism evaluations, both characteristics legally protected from discrimination in many countries~\cite{angwin2016machine,dastin2018amazon, kroll2017accountable}. Both the rule of law and a more ordinary sense of justice demand that society hold accountable those responsible and answerable for harms. In this work, we investigate how formal methods can aid society and the law in providing accountability
and trust in a clear, rigorous, and efficient manner through SMT-based automated reasoning.

Ideally, computer scientists would verify decision making algorithms to confirm their correctness before deployment. Using the techniques of program verification discrimination and other social harms would be automatically detected and eliminated by engineers before ADMs appear in the field. Unfortunately, the formal verification of these algorithms is undecidable in the general case, and even when theoretically possible will often require computational power that can make the task uneconomical, if not practically infeasible. Additionally, writing specifications for algorithmic decision making often treads onto contentious questions of law and policy with no universally agreed upon, let alone formalizable, answers~\cite{kroll2017accountable}. Nevertheless, assessing responsibility for harms remains vital to the safe use of ADMs.

\RC{To better understand the concept of accountability, consider a case in which one autonomous car hits another. We can ask: Which car is responsible for the accident? Which made the error, and to what end? When human drivers crash, lawyers investigate the drivers’ reasoning and actions. Did the drivers intend to hit the other car? Did the drivers know that an accident would occur? To infer drivers’ intentions, lawyers engage in direct and indirect examination to uncover the decision logic that lead to the accident.} 
Standard notions of due process view this opportunity to mount a defense as essential to justice. A person must be permitted to explain and justify themselves through arguing the facts of the case~\cite{sep-criminal-law}.
For example, if a human driver can convince a jury the crash was an unforeseeable accident, then they may be subject to lesser penalties than had they acted intentionally or negligently. A self-driving car cannot do likewise. Like a person, an ADM makes unsupervised decisions in complex environments which can lead to harm. But very much unlike a person, an algorithm cannot simply walk into a courtroom and swear to tell the whole truth. 

ADMs leave us with the traditional need for explanation but -- as a program and not a mind -- without the traditional means for acquiring one. Still, a program can be translated into logic, and logic can be rigorously reasoned about. While an algorithm may not be able to defend itself under interrogation, we show the right formal method can absolutely advocate on its behalf. And at least in one sense, a formal method makes for a better witness than a human -- while the human can lie, the car cannot. The provable rigor of our approach guarantees that whatever answers we get from the decision algorithm are both accurate and comprehensive.

\subsection{Contribution}

We developed a method \acronym{} and tool, \soid{}, for applying automated reasoning to `put the algorithm on the stand' in cases where its correct behavior cannot be reduced to a practically verifiable formal specification. Using \soid{}{}, an \emph{investigator} can pose tailored \emph{factual} and \emph{counterfactual} queries to better understand the functional intention of the
decision algorithm, in order to distinguish accidents from honest design failures from malicious design consequences. Our method also generates counterexamples and counterfactuals
to challenge flawed conclusions about agent behavior – just as would a human assisting
in their own defense. We assume only access to a program $A$ implementing a decision making algorithm $\mathcal{A}$ as granted by its \emph{controller}. Our method supports three types of queries: \emph{factuals} (`did the agent do...'), \emph{might counterfactuals} (`might the agent have possibly done...'), and \emph{would counterfactuals} (`would the agent have necessarily done...')~\cite{lewis2013counterfactuals}. We formally define each in~Section~\ref{sec:prelims}. The distinction between `would' and `might' counterfactuals is foundational to their treatment in philosophy~\cite{lewis2013counterfactuals}. Logically, we implement `might' counterfactuals using the $\exists$ operator, and `would' counterfactuals using $\forall$.

We have also implemented an example of a domain-specific graphical user interface (GUI) that allows operation of \soid{} without requiring technical expertise, but here describe how it works directly. The investigator starts from the logs of $A$, the factual information within capturing the state of the world as the agent perceived it. These logs can be easily translated into a first-order formalism as a sequence of equalities. For example, in the left panel of the car crash diagrammed in Figure~\ref{fig:crash} the information in the logs of $A$ (blue at bottom) about the other car (red at left) might be encodable as a system of equalities
\begin{equation}\label{eqn:fact}
  \varphi \equiv \texttt{agent1\_pos\_x} = 1.376 \land \cdots \land \texttt{agent1\_signal} = \text{RIGHT} \land \cdots.  
\end{equation}
In the GUI all such statements are translated into formal logic automatically. Using the automated reasoning method of symbolic execution, we can answer queries about what the car did under these constraints by checking formula entailment. Posing counterfactual queries requires
manipulating the state of the world since such queries ask how would the agent react
if the situation were different. Counterfactuals can be encoded by substituting constraint values
\begin{equation}\label{eqn:cfact}
\varphi' \equiv \varphi[(\texttt{agent1\_pos\_x} = 1.376) \mapsto (\texttt{agent1\_pos\_x} = 1.250)].
\end{equation}
This formalism also allows us to reason about whole families of counterfactuals, which can be defined by relaxing the constraints and negating the original factual state as a valid model
\begin{equation}\label{eqn:punc}
\varphi'' \equiv \varphi[(\texttt{agent1\_pos\_x} = 1.376) \mapsto (1.0 \leq \texttt{agent1\_pos\_x} \leq 1.5) ] \land \lnot \varphi.
\end{equation}
In this way, hypothetical-but-similar scenarios of interest to the investigator (`what if that car was outside instead of inside the intersection?', `what if the car was signaling a turn instead of straight?') can be rigorously formalized to enable automated analysis of the agent's behavior. 

\captionsetup[figure]{format=hang}
\begin{figure}[!t]
     \centering
     \begin{subfigure}[t]{0.396\textwidth}
        \begin{flushright}
            \includegraphics[width=\textwidth, trim={1.63cm 2.5cm 1.63cm 2.25cm},clip]
 {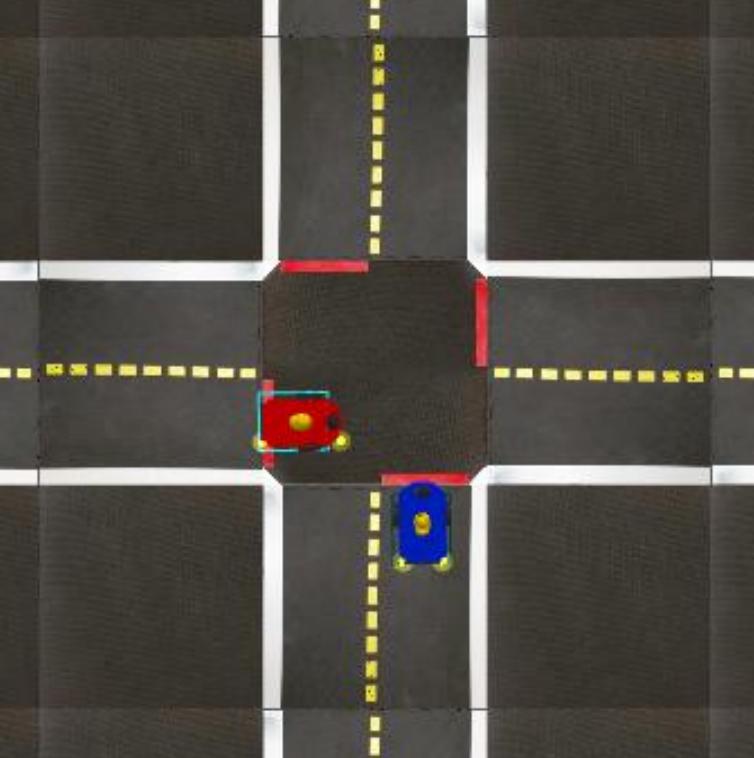}
        \end{flushright}
     \end{subfigure}
     \hspace{5mm}
     \begin{subfigure}[t]{0.4\textwidth}
          \begin{flushleft}
            \includegraphics[scale=0.285, trim={1.85cm 2.3cm 1.95cm 2.45cm},clip]{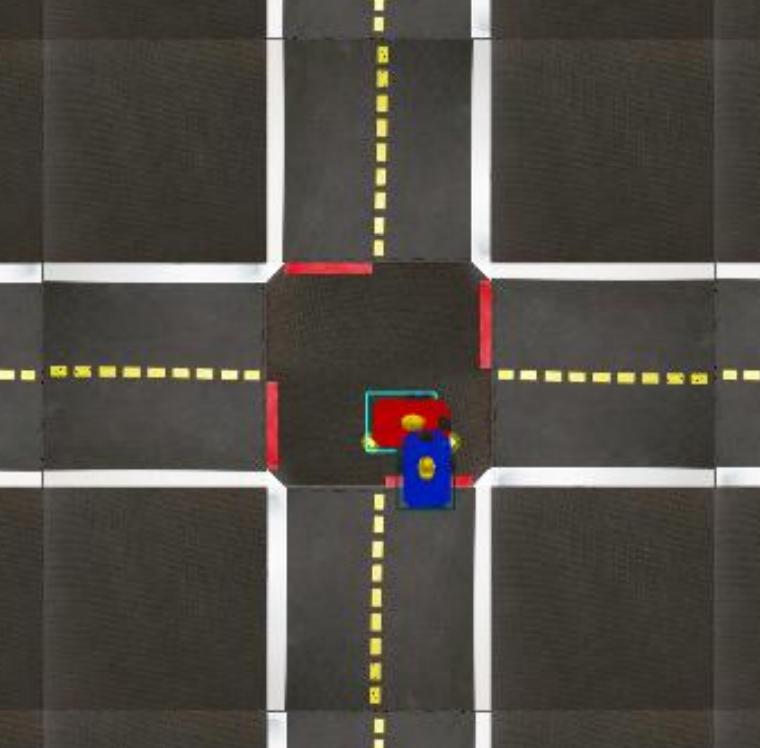}  
          \end{flushleft}
     \end{subfigure}
     \caption{A broadside car crash rendered in the \soid{} GUI. \label{fig:crash}}
\end{figure}

We model the adaptive, semi-automated process \soid{} implements as a \emph{counterfactual-guided logic exploration and abstraction refinement} (\acronym{}) loop. We use SMT-based automated reasoning for the oracle that answers the factual and counterfactual questions the investigator asks. We rely on the Z3 SMT solver~\cite{moura2008z3} and -- for our benchmarks -- KLEE-Float for symbolic execution~\cite{cadar2008klee,liew2017floating}, though through a modular API \soid{} supports any symbolic execution engine able to generate output in the SMT-LIB format. We demonstrate \soid{} on a car crash scenario caused by a reinforcement-learned agent within a simulated traffic environment, and show that \soid{} abstracts away sufficient detail of the code of the simulated autonomous vehicle to be used through an intuitive GUI by investigators without technical expertise.

Our \acronym{} loop is inspired by the CEGAR loop~\cite{clarke2000counterexample}. The main difference is that, while the CEGAR loop produces counterexamples automatically, we rely on the human experts to formulate the queries. Each query and its answer helps the expert investigator to build a body of knowledge, the $\textsc{Facts}$, about the decision
logic used in the autonomous agent. The investigator also decides when to terminate the \acronym{} loop and when they have enough information for the final judgment of an agent's culpability.

\paragraph{Outline.} In~\S\ref{sec:motivating}  we discuss the motivation for \soid{} and give a few examples that also underlie our benchmarks. In~\S\ref{sec:prelims} we overview SMT-based automated reasoning and counterfactual theories of causation and responsibility. In~\S\ref{sec:clear} we present our \acronym{} procedure, before in~\S\ref{sec:rsandqs} detailing how we logically encode agent decision making and counterfactual scenarios. We then describe the use of the \soid{} tool through quantitative and qualitative evaluation in~\S\ref{sec:soid}, before concluding in~\S\ref{sec:conclusion}. 

\subsection{Related Work}\label{sec:rw}

Formal methods for accountability are a burgeoning research topic, both in general~\cite{baier2021verification,feigenbaum2020accountability,datta2015program,kusters2010accountability,halpern2020causes1,halpern2020causes2,chockler2004responsibility} and focused on specific domains including automated and economic decision making~\cite{kroll2017accountable,baier2021game,baier2021responsibility,su2015interpretable,ghosh2019imli} and security~\cite{feigenbaum2011towards,kunnemann2019automated}. In particular, a recent work uses such methods to investigate similar questions to ours under stronger modeling assumptions~\cite{cordoba2023analyzing}. Trustworthy algorithmic decision making is now a major focus of classical formal methods research as well~\cite{garcia2015comprehensive,katz2019marabou,singh2019abstract,gehr2018ai2,alshiekh2018safe,dreossi2019verifai,christakis2021automated}. Intention and its relationship to responsibility is a central focus of law and the philosophy of action, with a cross-discipline history dating back millennia~\cite{moore2019causation,beebee2019counterfactual,sep-counterfactuals,lewis2013counterfactuals,wachter2017counterfactual}, including an extensive modern focus on (often symbolic) automated decision making in particular~\cite{bratman1987intention,cohen1990intention,rao1991modeling}. Counterfactual reasoning is accordingly a significant topic in Explainable AI (XAI)~\cite{arrieta2020explainable,adadi2018peeking,guidotti2018survey,padovan2023black}, using both logical~\cite{halpern2020causes1,halpern2020causes2, chockler2004responsibility} and statistical methods~\cite{wachter2017counterfactual}. SMT solving~\cite{moura2008z3,barrett2010smt} and symbolic execution~\cite{SurveySymExec-CSUR18,cadar2008klee}, are both foundational topics in automated reasoning.

\AC{Technically, our approach differs from the already significant body of work in counterfactual analysis of algorithmic decision making in two significant ways: in our analysis of executable code `as it runs' -- rather than just of a partial component (such as an ML model in isolation) or of some higher-level mathematical abstraction of the system -- and in our reliance on formal verification. By analyzing the code itself, rather than an idealized abstraction or particular model, we can capture behaviors of the entire software system: preprocessing, decision making, postprocessing, and any bugs and faults therein. This makes our analysis more complete. In~Section~\ref{app:dt}, for example, we consider a hypothetical case study where an instance of API misusage -- rather than any mistakes of logic within the code itself -- undermines a machine learning decision. The decision and its consequences are not analyzable by considering only the (correct on its own terms) decision model alone.}

\AC{Meanwhile, verification technologies allow us to analyze \textit{all possible executions} obeying \textit{highly expressive pre- and post-conditions}. SMT-based methods in particular provide the full expressiveness of first-order logic. As such, our approach can encode entire families of counterfactuals in order to provide a broad and thorough picture of agent decision making, so as to better interpret responsibility. Prevailing, often statistical methods, commonly focus more on gathering explanations prioritized on informal measures, such as minimality or diversity criteria, in order to demonstrate causality, see \textit{e.g.},~\cite{wachter2017counterfactual, mothilal2020explaining}. Informally, in computer science terms, this distinction is analogous to that in automated reasoning between methods for verification -- which emphasize overall safety and the correctness of the set of all program traces provably meeting some logically expressed property -- and methods like testing or fuzzing that focus on finding or excluding representative executions believed to exemplify that property~\cite{abebe2022adversarial}. Of course (SMT-based) verification does have costs -- it carries substantially more computational complexity than testing approaches, which can increase compute costs and limit scalability. Accordingly, we implement and benchmark the empirical efficacy of our method in a laboratory environment in~Section~\ref{sec:soid}. In human terms, our approach is analagous to enabling asking broad, positive questions about agent behavior under a coherent family of scenarios, rather than asking questions aimed primarily at generating or falsifying a particular claimed explanation for a (factual or counterfactual) decision. The work of~\cite{cordoba2023analyzing} and VerifAI~\cite{dreossi2019verifai} are the related approaches of which we are aware most similar to our own in goals and method, although the former requires stronger modeling assumptions while the latter sacrifices some formal guarantees for scalability.}

Our work also joins recent efforts to apply formal methods to legal reasoning and processes~\cite{merigoux:hal-03159939,delaet:hal-03447072,satoh2010proleg,khoja22}. The recent workshop series on Programming Languages and the Law (ProLaLa) has featured various domain-specific languages for formal reasoning about laws and regulations, such as formalizing norms used in privacy policy regulations, 
legal rental contracts, 
and property law, 
just to name a few.
The need for a formal definition of 
accountability and establishing a connection between software systems and legal systems has also been recognized in the Designing Accountable Software Systems (DASS) program, 
started by the National Science Foundation (NSF), the main funding body for foundational 
research in the USA. In his influential CACM column, Vardi~\cite{DBLP:journals/cacm/Vardi22e} has advocated for the importance of accountability in the computing marketplace.

\section{Motivation}\label{sec:motivating}

To illustrate the design and purpose of \soid{}{}, we continue with the crash from Figure~\ref{fig:crash}. In the left panel the autonomous vehicle $\mathcal{A}$ (blue at bottom) perceived that the other car (red at left) had its right turn signal on. Call this time $t^*$. When $\mathcal{A}$ entered the intersection -- believing the action to be safe, even though it lacked the right-of-way -- the other car proceeded straight, leading to a collision. Because it did not possess the right-of-way, $\mathcal{A}$ is culpable for the crash. This scenario forms the basis for our benchmarks in~Section~\ref{sec:soid}, where the specific question we investigate with $\soid{}$ is `with what purpose did $\mathcal{A}$ move, and so to what degree is it culpable for the crash?'

Note that the actions of $\mathcal{A}$ are consistent with with three significantly different interpretations: 
\begin{enumerate}[i]
    \item a \textbf{reasonable} (or \textbf{standard}) $\mathcal{A}$ drove carefully, but proceeded straight as is common human driving behavior given the (perception of an) indicated right turn;
    \item an \textbf{impatient} $\mathcal{A}$ drove with reckless indifference to the risk of a crash; and
    \item a \textbf{pathological} $\mathcal{A}$ drove to opportunistically cause crashes with other cars, without unjustifiable violations of traffic laws such as weaving into an oncoming lane.\footnote{Such an agent might be attempting a staged crash for insurance fraud.} If no opportunity presented itself, the car would move on.
\end{enumerate}
Even for the same act, these different interpretations will likely lead to drastically different liabilities for the controller under criminal or civil law. Interestingly, the natural language explanations for the i) \emph{reasonable} and iii) \emph{pathological} cars are identical: `moving straight would likely not cause a crash, so proceeding would bring me closer to my goal'. Nonetheless, counterfactual queries (notated $\counterfactual$) can help distinguish between these candidate interpretations. 
\begin{enumerate}
 \item[at] $t^*$\\
     \vspace{-7mm}
    \begin{quote}
        $\counterfactual$ Could a different turn signal have led $\mathcal{A}$ to remain stationary? 
    \end{quote}
    \vspace{-2mm}
    \begin{quote}
        $\counterfactual$ If $\mathcal{A}$ had arrived before the other car, and that other car was not signaling a turn, would $\mathcal{A}$ have waited? (\emph{e.g.}, to `bait' the other car into passing in front of it?)   
    \end{quote}
\end{enumerate}
Interpretation i) is consistent with (\emph{yes}, \emph{no}), ii) with (\emph{no}, \emph{no}), and iii) with (\emph{no}, \emph{yes}). Note the adaptive structure of our questions, where the second query can be skipped based on the answer to the first. The goal of $\soid{}$ is to enable efficient and adaptive investigation of such queries, in order to distinguish the computational reasoning underlying agent decisions and support principled assessment of responsibility. Although autonomous vehicles provide an insightful example, \acronym{} is not limited to cyberphysical systems. In~Section~\ref{app:dt} we use \soid{} to analyze a buggy application of a decision tree leading to a health risk misclassification. We also give another, more nuanced motivating self-driving car crash example in Appendix~\ref{app:motivate}, drawn directly from the self-driving car and smart road network industries.

\subsection{Legal Accountability for ADMs}\label{sec:la}

Before presenting the technical details of \soid{}, we also overview how the philosophy and practice of legal accountability might apply to ADMs, and so motivate the analysis \soid{} is designed to enable. We work in broad strokes as the legal liability scheme for ADMs is still being developed, and so we do not want to limit our consideration to a specific body of law. This in turn limits our ability to draw specific conclusions, as disparate bodies of law often place vastly different importance on the presence of intentionality, negligence, and other artifacts of decision making.

A core principle of legal accountability is that the `why' of a wrongful act is almost always relevant to evaluating how (severely) liable the actor is. In the words of the influential United States Supreme Court justice Oliver Wendell Holmes, `even a dog distinguishes between being stumbled over and being kicked.' As every kick has its own reasons bodies of law often distinguish further -- such as whether the `why' is an active intent to cause harm. Though holding algorithmic agents accountable raises the many technical challenges that motivate \soid{}, once we understand the `why' of an algorithmic decision we can still apply the same framework of our ethical and legal practices we always use for accountability~\cite{kroll2017accountable,hallevy2013robots}. The algorithmic nature of a harmful decision does not invalidate the need for accountability: the locus of Holmes' adage lies in the harm to the victim, being justifiably more aggrieved to be injured on purpose or due to a negligent disregard of the risk of a kick than by an accidental contact in the course of reasonable behavior. In practice, even though criminal law and civil law each place different emphasis on the presence of attributes like intention and negligence in a decision, intention in particular almost always matters to -- and often intensifies -- an agent's liability. Any time the law penalizes an unintentional offense it will almost always punish an intentional violation as well, and should intention be present, the law will usually apply the greatest possible penalties authorized for the harm. Given the importance of recognizing intention, \soid{} is designed to support rigorous and thorough \textit{findings of fact} about algorithmic decisions from which a principled assessment of their `why' can be drawn. 

Taking a step back, it is deeply contentious whether ADMs now and in the future can, could, or should possess agency, legal personhood, or sovereignty, and whether they can ever be morally and legally responsible~\cite{sep-ethics-ai}. Even the basic nature of computational decision making is a significant point of debate in artificial intelligence and philosophy, with a long and contentious history~\cite{bratman1987intention,sep-chinese-room}. For the moment, ADMs are not general intelligences. They will likely not for the foreseeable future possess cognition, agency, values, or theory of mind, nor will they formulate their own goals and desires, or be more than `fancy toasters' that proxy the decision making agency and responsibility of some answerable controller. An algorithm is no more than a computable function implemented by symbolic manipulation, statistically-inferred pattern matching, or a combination thereof. Nonetheless, even working off the most stringent rejection of modern ADMs forming explicit knowledge or intentional states, following Holmes we can see there is still value in grading the severity of a harmful decision. It is deeply ingrained in our governing frameworks for legal and moral accountability that when acting with the purpose of harm an agent (or its controller) has committed a greater transgression than in the case where the harm was unintended.

In this work we sidestep whether and how computers can possess intentionality by viewing intention through a functionalist lens. Even for conscious reasoning, it is impossible to replay a human being's actual thought process during a trial. So in practice, legal definitions refer instead to an \emph{ex post} rationalization of the agent's decisions made by the accountability process through the finding of fact. A person is assessed to have, \emph{e.g.}, purposely caused harm if the facts show they acted in a way that is consistent with purposeful behavior. We can approach computational reasoning in much the same way, with an investigator making an \emph{ex post} descriptive rationalization capturing their understanding of an ADM's decision making. This understanding then justifies a principled assessment of the controller's responsibility. For example, a controller can be assessed to have released into the world an ADM that the facts show acted in a way that is consistent with a purposeful attempt to cause harm. The design and algorithmic processes of the agent are otherwise irrelevant. How the ADM actually decision makes -- whether through statistical inference or explicit goal-oriented decision logic or otherwise -- is relevant only with respect to our ability to interrogate its decision making. This approach is consistent with \soid{}, which is capable of analyzing arbitrary programs.

\AC{An investigator using \soid{} to label an ADM as `reasonable' or `reckless' or `pathological' or similar is, however, only the start. How such an assessment should then be interpreted and used by an accountability process is, ultimately, a policy question. The unsettled nature of the laws, policies, and norms that govern ADMs, both for now and into the future, means there are many open questions about the relevance of the intent of an ADM and its relationship to the intent of the controller. But we can consider the ramifications in broad strokes. For individuals harmed by ADMs (whether as consumers, other end-users, or just unlucky `bystanders'), the situation seems little different than for human misconduct: the finding of intent amplifies the harm, and the victim can reasonably expect the accountability process to penalize the transgressor appropriately. More specific questions are harder. Should apparent intent in both the controller and ADM be assessed more harshly than in one or the other alone? Or would apparent intent in the controller render the actions of the ADM relevant only in how successfully the intent of the controller was carried out? How should an emergent `algorithmic intent' traceable to software faults interact with any documented, contrary evidence of the intent of the controller? These questions lay beyond the scope of this work, but they are each dependent on our capacity to first recognize and distinguish the functional intent of the ADM, motivating our research goals.} 

\AC{For the controllers of ADMs (whether as programmers, vendors, owners, or sovereign states), it is a natural starting point to view them as responsible for the actions of their computational agents, just as they would (most often) be responsible for human agents acting on their behalf. With reward comes responsibility. If a controller profits from deploying an ADM, so must they bear the costs of its harms. Legal concepts governing humans acting on behalf or through each other or organizations are well-founded throughout, \textit{e.g.}, agency and criminal law~\cite{hallevy2013robots,cornellLII}. These mechanisms may be either directly applicable or can form the basis for analogous systems governing algorithmic accountability. For example, just as a business is expected to adequately prepare (\text{i.e.}, train) a human agent to operate on their behalf without causing harm, a controller can be expected to adequately prepare (\textit{i.e.}, design or train) a computational agent. What standard the controller sets internally \textit{ex ante} before deploying the ADM is primarily relevant insofar as it provides confidence to the controller the ADM will not be found \textit{ex post} to have operated in a way consistent with an intent to harm -- and so carry with it a corresponding increase in liability.} 

\AC{Grounding our approach in the functionalist perspective also helps us manage difficult questions about the validity of anthropomorphizing algorithmic systems through the use of language like `intent', `beliefs', or `reasonableness', as we ourselves have done throughout~Section~\ref{sec:motivating}. It is not immediately clear such language is intrinsically confusing or harmful: the use of such labels in characterizing automated decision making is decades-old, to the extent that consideration of whether and how machines can form intentional states has informed how prevailing approaches in the philosophy of action now capture whether and how humans form them~\cite{bratman1987intention, bratman1988plans}. Moreover, as accountability processes begin to wrestle with algorithmic decision making some anthropomorphization is perhaps unavoidable, due to the often heavily analogical nature of legal reasoning~\cite{levi1947introduction}. We ourselves invoked the analogy of Holmes to frame our discussion. The validity of some such analogies are in some cases already contentious. For example, whether the `creativity' required to earn authorship under copyright law must necessarily be human is under active consideration in litigation and scholarship concerning generative models~\cite{dornis2020artificial, thaler}. On the other hand, the negative consequences of anthropomorphizing ADMs has been itself widely recognized in scholarship and science fiction dating back decades: it can cause us to, \textit{e.g.}, ascribe to machines and their actions non-existent morality and common sense, or grow attached to them in ways that cause us to disregard their harms or cloud our judgement of their true capabilities and limitations.}

\AC{To avoid conflation, perhaps machine analogues to terms like `intention' will arise. But wherever the legal and policy language settles, the core philosophical principle -- that a functional interpretation of the `why' of a decision matters for accountability -- will hold. So long as the philosophical (and computational) principles remain, the goals of our research should likewise remain applicable no matter what norms of language develop.}

\section{Technical Background}\label{sec:prelims}

In this section we present some relevant foundations for \acronym{} from formal and automated reasoning.

\subsection{Programs and Traces}\label{subsec:pts}

Let $A$ be the program instantiating a decision algorithm $\mathcal{A}$. $A$ operates over a finite set of program variables $\textsf{var}(A) = V = \{ v_1, \, \ldots, \, v_n \}$. We view $\textsf{var}(A)$ as a union of disjoint subsets $V = I \cup D$. The set $D = \{ vd_1, \, \ldots, \, vd_{n_D} \}$ is the set of internal \emph{decision} variables. The set of input variables $I = E \cup S$ is itself partitioned into sets of \emph{environment} variables $E = \{ ve_1, \, \ldots, \, ve_{n_E} \}$ and \emph{state} variables $S = \{ vs_1, \, \ldots, \, vs_{n_S} \}$. Therefore $n = n_E + n_S + n_D$. $E$ is composed of variables encoding input sources external to the agent, while $S$ is composed of variables encoding internal state. 

Every $v_i$ is associated with a domain $\mathcal{D}_{v_i}$. A \emph{state} is the composition of the variable assignments at that point of the execution $\sigma \in \mathcal{D} = (\mathcal{D}_{v_1} \times \cdots \times \mathcal{D}_{v_n})$. Given $\sigma = (\mathdutchcal d_1, \dots, \mathdutchcal d_n)\in \mathcal D$, we denote the restriction to only environment variables as $\rsigma_E = (\mathdutchcal d_1, \dots, \mathdutchcal d_{e_{n_E}})\in \mathcal{D}_{e_1}\times\cdots\times \mathcal{D}_{e_{n_E}}$, 
and similarly for $\rsigma_S$, $\rsigma_I$, and $\rsigma_D$. A \emph{trace} $\tau = \sigma_1\sigma_2\sigma_3\ldots$ is a (possibly infinite) sequence of states. We access states by $\tau(t) = \sigma_t$, and values of variables at states by $\sigma(v_i) = \mathdutchcal{d}_i$. The set of possible traces is governed by a transition relation $R \subseteq \mathcal{D} \times \mathcal{D}$, so that $\tau(t) = \sigma$ and $\tau(t + 1) = \sigma'$ may occur within some $\tau$ only if $(\sigma, \, \sigma') \in R$. The program $A$ encodes a partial transition relation, $R_{A}$, with the constraint that $(\sigma, \, \sigma') \in R_{A}$ requires that $\forall i \in [n_{E}]. \; \sigma(ve_i) = \sigma'(ve_i)$. That is, by definition $A$ cannot define how the environment $\mathcal{E}[\mathcal{A}]$ updates the $ve_i$, as that capacity is exactly the distinction between a program and the environment it runs within.

We work with statements over the program variables in the logic $\texttt{QF\_FPBV}$. The available domains are those of floating points and bitvectors. An expression $e$ is built from variables composed with the constants and function symbols defined over those domains, \emph{e.g.}, $(\textsf{fp.to\_real } \text{b}011) + 2.34 \cdot v_{14}$. A formula $\varphi$ is built from expressions and relation symbols composed using propositional operators, \emph{e.g.}, the prior expression could extend to the formula $(\textsf{fp.to\_real } \text{b}011) + 2.34 \cdot v_{14} \geq -1.87$.  If $\varphi$ is a formula over $\textsf{var}(A) = V$, notated $\varphi(V)$, and $\sigma = (\mathdutchcal{d}_1, \, \ldots, \, \mathdutchcal{d}_n) \in \mathcal{D}$, then we write $\sigma \models \varphi(V)$ if the constant formula that results from substituting each $\mathdutchcal{d}_i$ for $v_i$ evaluates to $\textsc{True}$. We use $\varphi$ and $\psi$ when writing formulas over $I$ that represent scenarios, $\beta$ when writing formulas over $D$ representing decisions made, and $\Pi$ when writing formulas over $V$ representing whole program executions.

A symbolic state $\hat{\sigma} = \hat{\mathcal{D}} = (\hat{\mathcal{D}}_{\hat{v}_1} \times \cdots \times \hat{\mathcal{D}}_{\hat{v}_n}, \, \pi_{\hat{\sigma}})$ is defined over a set of symbolic variables $\textsf{symvar}(A) = \hat{V} = \{ \hat{v}_1, \, \ldots, \, \hat{v}_n\}$, with $\hat{I}$, $\hat{E}$, $\hat{S}$, and $\hat{D}$ defined analogously. Each $\hat{\mathcal{D}}_{\hat{v}_i}$ augments the concrete domain $\mathcal{D}_{v_i}$ by allowing $\hat{v}_i$ to reference an expression $e_i$ over a set of symbolic values $\{ \alpha_i \}_{i \in [k]}$, \emph{e.g.}, $e_i = \alpha_i$ or $e_i = 2\alpha_j + 3.0$. We write $\hat{\sigma} \models \varphi(\hat{V})$ for formulas over $\textsf{symvar}(A)$ analogously to the concrete case, still in $\texttt{QF\_FPBV}$. The \emph{path constraint} $\pi_{\hat{\sigma}}(\alpha_1, \, \ldots, \, \alpha_n)$ is such a formula over the $\alpha_i$, which captures their possible settings. We let $e_{\hat{\sigma}} \equiv \bigwedge_{i \in [n]} e'_{i}$ where $e'_i \equiv e_i$ for symbolic-valued variables and $e'_i \equiv (\alpha_i = \mathdutchcal{d}_i)$ for concrete-valued variables in $\hat{\sigma}$. Let $\textsf{refs}_{\hat{\sigma}}(\hat{v_i})$ be the $j$ such that $\alpha_j$ is referenced within $e_i$ at state $\hat{\sigma}$. We define the \emph{reference closure} of a symbolic variable recursively, as $\textsf{rcl}_{\hat{\sigma}}(\hat{v}_i) = \{ \alpha_j \}_{j \in \textsf{refs}_{\hat{\sigma}}(\hat{v_i})} \land \bigcup_{k \in \textsf{refs}_{\hat{\sigma}}(\hat{v}_i)} \textsf{rcl}(\hat{v}_k)$. We then define $\left.e_{\hat{\sigma}}\right|_{i}$ as the subformula of $e_{\hat{\sigma}}$ referencing exactly those $\alpha_j \in \textsf{rcl}_{\hat{\sigma}}(\hat{v}_i)$. In sum, $\left.e_{\hat{\sigma}}\right|_{i}$ captures any and all constraints on the value of $\hat{v}_i$ when a symbolic execution of $A$ reaches $\hat{\sigma}$. 

We write $\sigma \models \hat{\sigma}$ if the constant formula that results from substituting each $\mathdutchcal{d}_i$ for $\alpha_i$ in $\pi_{\hat{\sigma}} \land e_{\hat{\sigma}}$ evaluates to $\textsc{True}$. We define the concretization of a symbolic state as $\gamma(\hat{\sigma}) = \{ \sigma \in \mathcal{D} \mid \sigma \models \hat{\sigma} \}$, and the meet $\hat{\sigma}_i \sqcap \hat{\sigma}_j$ by $\gamma(\hat{\sigma}_i \sqcap \hat{\sigma}_j) = \gamma(\hat{\sigma}_i) \cap \gamma(\hat{\sigma}_j)$. A symbolic trace $\hat{\tau} = \hat{\sigma}_1\hat{\sigma}_2\hat{\sigma}_3\ldots$ and a symbolic transition relation $\hat{R}(\hat{\sigma}_i, \, \hat{\sigma}_j)$ are defined as in the concrete case.

We often work with a special class of formulas we call \emph{relaxations}. Relaxations encode formulas that start from a concrete state $\sigma$, and then independently relax the variable constraints so that they can take on a range of values. A \emph{punctured relaxation} excludes a unique model (\emph{i.e.}, state).
\begin{defi}\label{def:relax}
A \textbf{relaxation} is a formula $\varphi(\hat{I}) \equiv \varphi_1(\hat{I}) \land \cdots \land \varphi_n(\hat{I})$ for which every $\varphi_i$ references exactly one $\hat{v}_i$.
\end{defi}
\begin{defi}\label{def:prelax}
A \textbf{punctured relaxation} is a formula $\varphi(\hat{I}) \equiv \varphi'(\hat{I}) \land \lnot F(\hat{I})$ where $\varphi'$ is a relaxation and $\rsigma_{I} \models F(\hat{I})$ for exactly one $\rsigma_{I}$.
\end{defi}

\subsection{SMT-based Program Analysis}

We overview SMT solving and symbolic execution, and refer the reader to~\cite{moura2008z3} and~\cite{cadar2008klee,SurveySymExec-CSUR18} respectively for greater detail.

\paragraph{SMT Solving.} Satisfiability modulo theory (SMT) solving is a form of automated theorem proving that computes the satisfiability (and, by duality, validity) of formulas in certain fragments of first-order (FO) logic. SMT solvers -- we use the state-of-the-art Z3~\cite{moura2008z3} -- are also able to return satisfying models when they exist. In the case of validity queries, these models are concrete counterexamples to the disproven theorem. An SMT formula $\Phi$ is a FO-formula over a decidable theory $T$. In this work, we set $T = \texttt{QF\_FPBV}$, the combination of quantifier-free formulas in the theories of floating-points and bitvectors~\cite{barrett2010smt}. We require support for floating-point statements due to their centrality in machine learning.  n addition to the normal syntax of $\texttt{QF\_FPBV}$, we will need to support a modal counterfactual operator, $x \counterfactual y$, which we formally define in~Section~\ref{subsec:cr}. It is known that modal logics are `robustly decideable' with a generic transformation into a decidable theory~\cite{vardi1997modal}. We however instead use a custom SMT encoding of our counterfactual relation, described in~Section~\ref{sec:rsandqs}. This representation phrases queries so that satisfying models produced by the solver encode concrete counterfactuals.

\paragraph{Symbolic Execution.} One of the great successes of SMT-based program analysis, symbolic execution explores the reachable paths of a program $P$ when executed over $\hat{V}$. Concrete values are computed exactly, assignments to or from symbolic-valued variables update their expressions, and branching conditions update the path constraints. For a branch condition $b(\hat{V})$ reached at symbolic state $\hat{\sigma}_j$, such as a guard for an $\textsf{if}$-statement or $\textsf{while}$-loop, an SMT solver is invoked to check which branches are feasible under $\pi_j$, \emph{i.e.}, whether $\Phi \equiv b(\hat{V}) \land \pi_j$ and/or $\Phi' \equiv \lnot b(\hat{V}) \land \pi_j$ are satisfiable. If only one is, the execution continues along it and its path constraints are updated. For example, if only $\Phi'$ is satisfiable then $\pi_j \leftarrow \pi_{j} \land \lnot b(\hat{V})$. If both are satisfiable, the execution can fork in order to explore all reachable paths and produce a set of constraint formulas $\{\pi_i\}_{i \in [ct]}$ encoding each path at termination. By setting initial constraints on the input variables, symbolic execution can narrow the search space to only the paths of executions meeting preconditions. For our benchmarks we use KLEE-Float~\cite{cadar2008klee,liew2017floating}, a symbolic execution engine for C with support for floating-point operations, backended by Z3. KLEE-Float generates SMT queries in $\texttt{QF\_FPBV}$.\footnote{This requires Ackermannization of arrays to bitvectors~\cite{de2008model}.}


\subsection{Counterfactual Reasoning}\label{subsec:cr}

Counterfactuals are essential to modern theories of causation and responsibility in philosophy and law~\cite{moore2019causation,beebee2019counterfactual,sep-counterfactuals,lewis2013counterfactuals,wachter2017counterfactual}, and are are already quite prominent in formal methods for accountability~\cite{baier2021game, baier2021responsibility, baier2021verification, chockler2004responsibility,feigenbaum2020accountability,datta2015program,halpern2020causes1,halpern2020causes2,wachter2017counterfactual}. Causation refers to the influence an input of some process has on its output, \emph{e.g.}, in an MDP how a choice of action influences the resultant distribution on (some property of) the next state, or for a program how changes in the inputs influence the outputs. In the simplest possible case, an action $a$ is an actual cause of an outcome, such as a harm $h$, if under every counterfactual $a$ is both necessary and sufficient for the harm to occur, notated as $a \counterfactual h$ and $\lnot a \counterfactual \lnot h$. Here $a$ and $h$ are some domain-specific formal objects, while the modal notation $x \counterfactual y$, popularized by Lewis~\cite{lewis2013counterfactuals}, means \emph{if $x$ had happened, then $y$ would have happened}. The canonical unified logical and computational treatment of counterfactuals are the works of Pearl and Halpern~\cite{halpern2020causes1,halpern2020causes2}, which provide a directed acyclic graph-based formalism able to inductively model causal effects in far more complicated dependency structures. 

Responsibility is a higher-order property than causality. As raised in~Section~\ref{sec:motivating}, counterfactuals of decision making are essential to interrogating \emph{intention} and with it responsibility. Put simply, counterfactuals enable challenging and verifying proposed explanations for an agent's decisions. Importantly, counterfactual analysis is well-defined independent of any specific decisions, or the inferences about agent intention made on the basis of them. So although we often frame our discussion in terms of behaviors that match common understanding of human decision making, we stress our formal approach would generalize to future models of algorithmic decision making interested in very different attributes and behaviors than those we apply to humans.

\paragraph{Formalism.} Working \emph{ex post}, we formalize counterfactual algorithmic decision making by starting from a set of \emph{factual} traces $T^f = \{ \tau^f_1, \, \ldots, \, \tau^f_k \}$ encoding a history of $A$'s harmful or otherwise relevant decisions. A \emph{counterfactual} $\tau^{cf} = (\tau^f, \, \tau^{pp}, \, t^*)$ is a tuple of a factual trace $\tau^f$, a \emph{past possibility} trace $\tau^{pp}$, and an integer $t^* \in \mathbb{N}$ that we call the \emph{keyframe}. We write $\tau^{cf}.\textsf{fst} = \tau^f$ and $\tau^{cf}.\textsf{snd} = \tau^{pp}$. Intuitively, we want counterfactuals to represent the decisions that $\mathcal{A}$ \emph{would have} made in revealing alternate circumstances. What makes a counterfactual `revealing' is a deep and nuanced question, but the philosophy of action highlighs the importance of particular attributes for counterfactual scenarios to be meaningful. We enforce these tenets as predicates, in order to guarantee that our method works for counterfactuals possessing them.
\begin{enumerate}[leftmargin=*]
    \itemsep0.8em
    \item \emph{Non-backtracking}: Counterfactuals should encode scenarios with a meaningful relationship to observed events, and should not require us to `replay' the evolution of the world under significant changes to past history. Formally, both $\tau^f$ and $\tau^{pp}$ must be defined at $t^*$, and must agree up until it:
    \begin{equation*}
        \quad \textsf{nbt}(\tau^{cf}) \equiv \forall t'. \, t^* < |\tau^{f}| \land t^* < |\tau^{pp}| \, \land \big(t' < t^* \to \tau^f(t') = \tau^{pp}(t')\big).
    \end{equation*}
    Every past possibility trace forms a non-backtracking counterfactual for $t^* = 1$, so usually choice of keyframe will come first from some \emph{a priori} understanding the investigator has about the critical decision moments leading to a harm. Note that we do not place any restrictions on the (implied) transition $\tau^f(t^* -1) \mapsto \tau^{pp}(t^*)$. In particular, we do not require that $(\tau^f(t^* -1), \, \tau^{pp}(t^*)) \in R_{A}$. Rather, we allow the investigator to `wave a magic wand' in order to define alternate scenarios of interest.\footnote{Counterfactual theories of responsibility often emphasize \emph{similarity} between factual and past possibility~\cite{lewis2013counterfactuals}. In principle, we could model this concern by constraining the distance between $\tau^f(t^*)$ and $\tau^{pp}(t^*)$ by bounding a suitable domain- and program-specific metric $\delta(\tau^f(t^*), \, \tau^{pp}(t^*)) < k_{\delta}$ for constant $k_{\delta}$. As defining metrics able to capture the nuance of large program structures and complex social and physical systems may be extremely difficult, we will instead trust the human expert investigator to know how to specify the `right' scenarios of interest.}
    
    \item \emph{Scope of Decisions}: In order to clarify the purpose of an agent's actions, what an agent \emph{might have done} is less important than what it \emph{would have decided to try to do}. In complex systems the former is often contaminated by the decisions of other agents and the evolution of the environment, as agents rarely have complete control over outcomes. To clarify this distinction, we enforce a scope to the decision making of $\mathcal{A}$ by limiting past possibility traces to internal reasoning. Therefore, no transition after $t^*$ may update the valuations of $E$.
    \begin{equation*}
    \quad \textsf{scope}(\tau, \, t^*) \equiv \forall t' \forall i \in [n_{E}]. \, t^* < t' \to \tau(t' - 1)(ve_i) = \tau(t')(ve_i).
    \end{equation*}
\end{enumerate}
This $\textsf{scope}$ constraint can be interpreted as formalizing that we do not require or use access to an environmental model $\mathcal{E}[\mathcal{A}]$. We can think of this assumption in terms of the transition relation on states that constraints $\tau^{pp}$. We would need $\mathcal{E}[\mathcal{A}]$ in order to define a total transition relation on states $R'_{\mathcal{E}[\mathcal{A}]}(\sigma_i, \, \sigma_{i+1})$, given that $A$ by itself cannot define how the environment changes. By working without an $\mathcal{E}[\mathcal{A}]$, we are limited (enforced by the predicate) to $\tau^{pp}$ produced by the partial relation $R_{A}(\sigma_i, \, \sigma_{i+1})$ only defined on pairs of states that agree on all $ve_i$. Note, however, that this predicate does not preclude $\mathcal{A}$ from incorporating a history of data about $\mathcal{E}[\mathcal{A}]$ into a single decision: the $ve_i$ capture just the freshest environmental information. Any historical data from previous polls of the sensors or network can be passed as various $vs_j$ as $A$ propagates them between executions.

An \emph{admissible} counterfactual is both non-backtracking and limited in scope:
$$\textsf{admit}(\tau^{cf}) \equiv \textsf{nbt}(\tau^{cf}) \land \textsf{scope}(\tau^{pp}, \, t^*).$$
In order to use automated reasoning to interrogate $\mathcal{A}$'s decision making history (in the form of $T^f$), we need to formalize the semantics of two different families of trace properties:
$$\begin{array}{ll}
    \textit{factuals}: & \tau^f \stackrel{?}{\models} \varphi(\hat{I}) \fprop \beta(\hat{D})  \\
    \quad\quad\text{when faced with } \varphi \text{ at time } t \text{, did } \mathcal{A} \text{ do } \beta \text{ at time } \ell \text{?} \span \\[6pt]
    \textit{counterfactuals}: & \tau^{cf} \stackrel{?}{\models} \varphi(\hat{I}) \cprop \beta(\hat{D})  \\
    \quad\quad\text{if faced with } \varphi \text{ at time } t^* \text{, would } \mathcal{A} \text{ have done } \beta \text{ at time } \ell \text{?} \span \\
\end{array}$$
We begin with factuals. In order to formulate a useful semantics for this predicate, we need a reasonable interpretation of the subsequence $\tau^f(t)\ldots\tau^f(\ell)$ that the property implicitly analyzes. Working after-the-fact justifies one: as a \emph{window of agency}, during which either $\mathcal{A}$ made a decision or failed to do so as a harm played out. If the window of agency was still open, we could not be working \emph{ex post}. We can then formulate a semantic definition in which $\varphi(\hat{I})$ specifies preconditions on the inputs to $A$, and $\beta(\hat{D})$ then specifies post-conditions on its decision variables, limited in scope and to the window of agency.
\begin{equation*}
\tau^f \models \varphi(\hat{I}) \fprop \beta(\hat{D}) \quad \quad
\text{ if } \textsf{scope}(\tau^f, \, t) \text{ and } \left.\tau^f(t)\right|_{I} \models \varphi(\hat{I}) \text{ and } \left.\tau^{f}(\ell)\right|_{D} \models \beta(\hat{D}).
\end{equation*}
On the contrary, for counterfactuals it is not obvious that we can assume a known and finite window. As $\tau^{pp}(t^*)$ may have never been observed it could lead to $A$ looping forever, and without an $\mathcal{E}[\mathcal{A}]$ we cannot know how long the window would last. However, as counterfactuals are objects of our own creation, we will assume that the investigator can conjecture a reasonable window $[t^*, \ell]$ within which the decision of $\mathcal{A}$ must be made in order to be timely, with responsibility attaching to the agent if it is unable to make a decision within it. This assumption guarantees termination.

With this philosophically distinct but mathematically equivalent assumption, we are able to define the semantics of the counterfactual operator as
\begin{multline*}
\tau^{cf} \models \varphi(\hat{I}) \cprop \beta(\hat{D}) \quad \quad \text{ if } \textsf{admit}(\tau^{cf}) \text{ and } \left.\tau^f(t^*)\right|_{I} \not\models \varphi(\hat{I}) \text{ and } \hspace{1.6cm} \\ \left.\tau^{pp}(t^*)\right|_{I} \models \varphi(\hat{I}) \text{ and } \left.\tau^{pp}(\ell)\right|_{D} \models \beta(\hat{D}).
\end{multline*}
In practice, our use of symbolic execution will abstract away these details by framing the scope and the window of agency so that $\textsf{admit}(\tau^{cf})$ is true by construction. We discuss this formally in~Section~\ref{sec:rsandqs}.

Lastly, we will oftentimes discuss \emph{families of counterfactuals}, which are sets of counterfactuals which share a factual trace 
$$T^{cf} = \{ \tau_1^{cf}, \, \ldots, \, \tau_{k'}^{cf} \mid \forall i, \, j \in [k']. \, \tau_{i}^{cf}.\textsf{fst} = \tau_{j}^{cf}.\textsf{fst} \}.$$
Families of counterfactuals can naturally be defined implicitly by a tuple $\textsf{ctx} = (\tau^f, \, t^*, \, \varphi)$ as 
\begin{equation*}
T_{\textsf{ctx}}^{cf} = \{ \tau^{cf} \mid \textsf{admit}(\tau^{cf}) \text{ and }  \tau^{cf}.\textsf{fst}(t^*) \not\models \varphi(\hat{I}) \text{ and } \tau^{cf}.\textsf{snd}(t^*) \models \varphi(\hat{I}) \}.
\end{equation*}
Choice of context $\textsf{ctx}$ will be our usual way of delineating families, especially as $\varphi$ then provides a descriptive representation.

\section{Formal Reasoning for Accountability}\label{sec:clear}

Our method, \acronym{}, has both a practical and theoretical basis. Practically, \acronym{} is intended for \emph{counterfactual-guided logic exploration} (\textbf{CLE}AR). Given a program $A$ and log of factual executions $T^f$, \acronym{} provides an interactive, adaptive procedure for the investigator to refine a set \textsc{Facts} of trace properties capturing how $A$ behaves in $T^f$ and related counterfactuals, just as in a legal finding of fact. Theoretically, \acronym{} can also be understood as an instance of semi-automated \emph{counterfactual-guided abstraction refinement} (\textbf{C}LE\textbf{AR}), in the style of the automated CEGAR~\cite{clarke2000counterexample}. This theoretical interpretation provides a potentially fertile ground for future automated extensions of \acronym{} and \soid{} to bolster the explanatory power of the logic exploration.

\subsection{Counterfactual-Guided Logic Exploration}

Given a program $A$ and log of factual executions $T^f$, \acronym{} aims to provide an interactive, adaptive procedure for the investigator to refine a set \textsc{Facts} of trace properties capturing how $A$ behaves in $T^f$ and related counterfactuals, just as in a legal finding of fact. We call this \emph{counterfactual-guided logic exploration}. Our end goal is for its implementation in a tool like \soid{} is to enable continuous refinement of a \emph{formal representation} of $\mathcal{A}$'s decision making:
\begin{equation*}
    \textsc{Facts} = \{ \ldots, \big(\tau_i^f, \, \varphi_{i}(I) \fprop \beta_{i}(V) \big), \ldots \} \, \cup
\{ \ldots, \big(\tau_{j}^{cf}, \, \varphi_{j}(I) \cprop \beta_{j}(V) \big), \, \ldots \}.
\end{equation*}
Each fact in $\textsc{Facts}$ is composed of a (counter)factual trace and a property that holds over it, as proven by an SMT solver. Since we do not assume access to some overarching property $P(A)$ that we aim to prove, $\textsc{Facts}$ is the ultimate product of the counterfactual-guided logic exploration. The human investigator is trusted to take $\textsc{Facts}$ and use it to assess $\mathcal{A}$'s responsibility for a harm.

\RA{Our method relies on an \emph{oracle} interface, $\mathcal{O}_A(\cdot)$, into the decision logic of $A$. We specify factual queries as $q = (\varphi, \, \beta)$, pairing an input constraint $\varphi$ and a behavior $\beta$. Such a query asks whether the factual program execution starting from the program state encoded by $\varphi$ results in the agent behavior encoded by $\beta$, or more formally...
    \begin{quote}
        for $\tau^f$ s.t.~$\tau^{f}(t) \models \varphi(\hat{I})$ uniquely, does $\tau^f \models \varphi(\hat{I}) \fprop \beta(\hat{D})$?
    \end{quote}
    We specify counterfactual queries as $q = (\mathbf{1}_{\exists}, \, \varphi, \, \beta)$, composed of a `might'/`would' (existential/universal) indicator bit $\mathbf{1}_{\exists}$, input constraint $\varphi$, and behavior $\beta$. Each such query asks whether there exists a program execution (a might counterfactual) starting from a program state encoded by $\varphi$ that results in the agent behavior encoded by $\beta$, more formally...
    \begin{quote}
        if $\mathbf{1}_{\exists}$, for $\textsf{ctx} = (\tau^f, \, t^*, \, \varphi)$ does there exist $\tau^{cf} \in T^{cf}_{\textsf{ctx}}$ such that $\tau^{cf} \models \varphi(\hat{I}) \cprop \beta(\hat{D})$;
    \end{quote}
    or similarly, but now whether for all executions starting from the program states encoded by $\varphi$ (a would counterfactual)...
    \begin{quote}
     if $\lnot \mathbf{1}_{\exists}$, for $\textsf{ctx} = (\tau^f, \, t^*, \, \varphi)$, whether for all $\tau^{cf} \in T^{cf}_{\textsf{ctx}}$ it follows that $\tau^{cf} \models \varphi(\hat{I}) \cprop \beta(\hat{D})$?
    \end{quote}
This quite minimal information is sufficient for the oracle to resolve the information needed to improve $\textsc{Facts}$. Note that as $T^{cf}_{\textsf{ctx}}$ excludes the factual trace as a valid continuance from $t^*$, it is not possible for a counterfactual query to resolve (positively or negatively) on the basis of the factual execution -- only counterfactuals are considered.}
\algnewcommand{\IIf}[1]{\State\algorithmicif\ #1}
\algnewcommand{\IThen}[1]{\State\quad\algorithmicthen\ #1}
\algnewcommand{\IElse}[1]{\State\quad\algorithmicelse\hspace{0.42em}\ #1}
\algnewcommand{\ElseIIf}[1]{\algorithmicelse\ #1}
\algnewcommand{\ITE}[2]{\State\quad\algorithmicthen\ #1 \algorithmicelse\ #2}
\begin{algorithm*}[!t]
\caption{\acronym{} Loop}\label{alg:ca}
\begin{algorithmic}[1]
\State $T^{f} \leftarrow \{ \tau_1^{f}, \, \ldots, \, \tau_k^{f} \}$, $\textsc{Facts} \leftarrow \{\}$
\While{$\textit{not done?}$}
\If{$\textit{factual?}$}
\State $(\tau^f, \, t), \, \varphi \leftarrow \textit{start?}(T^{f}, \, I)$
\State $\beta \leftarrow \textit{behavior?}(V)$
\State $r_i = (b, \underline{\hspace{7pt}} \,)\leftarrow \mathcal{O}_{A}(q_i = (\varphi, \, \beta))$
\IIf{$b = 0$}
\IThen{$\textsc{Facts} \leftarrow \textsc{Facts} \cup \{ (\tau^f(t), \, \varphi(\hat{I}) \fprop \lnot \beta(\hat{D})) \}$}
\IElse{$\textsc{Facts} \leftarrow \textsc{Facts} \cup \{ (\tau^f(t), \, \varphi(\hat{I}) \fprop \beta(\hat{D})) \}$}
\Else
\State $\textsf{ctx} = (\tau^f, \, t^*, \, \varphi), \, F \leftarrow \textit{cf}(T^{f}, \, I)$
\State $\beta, \, \mathbf{1}_{\exists} \leftarrow \textit{behavior?}(V)$
\State $r_i = (b, \, \mathcal{M}) \leftarrow \mathcal{O}_{A}(q_i = (\mathbf{1}_{\exists}, \, \varphi, \, \beta, \, F))$
\If{$\mathbf{1}_{\exists} = 0$}
\IIf{$b = 0$}
\IThen{$\textsc{Facts} \leftarrow \textsc{Facts} \cup \{ (\mathcal{M}, \, \varphi(\hat{I}) \cprop \lnot \beta(\hat{D})) \}$}
\IElse{$\textsc{Facts} \leftarrow \textsc{Facts} \cup \{ (\tau_j^{cf}(t^*), \, \varphi(\hat{I}) \cprop \beta(\hat{D})) \}_{\tau_j^{cf} \in T^{cf}_{\textsf{ctx}}}$}
\Else
\IIf{$b = 0$}
\IThen{$\textsc{Facts} \leftarrow \textsc{Facts} \cup \{ (\tau_j^{cf}(t^*), \, \varphi(\hat{I}) \cprop \lnot \beta(\hat{D})) \}_{\tau_j^{cf} \in T^{cf}_{\textsf{ctx}}}$}
\IElse{$\textsc{Facts} \leftarrow \textsc{Facts} \cup \{ (\mathcal{M}, \, \varphi(\hat{I}) \cprop \beta(\hat{D})) \}$}
\EndIf
\EndIf
\EndWhile
\end{algorithmic}
\end{algorithm*}

\paragraph{An Example.} Consider an investigator trying to understand the facts under which the car in Figure~\ref{fig:crash} did, would, or might enter the intersection. If $\varphi$ of Equation~\ref{eqn:fact} represents the critical moment at which the car moved into the intersection, then the investigator could query
$$q_1 = (\varphi, \texttt{move} = 1)$$ 
where $\code{move}$ is a decision variable. If, for example, $r_1 = (1, \_\_\,)$, then Algorithm~\ref{alg:ca} will set
$$\textsc{Facts} = \{ ( \tau^f, \varphi \fprop \texttt{move} = 1) \}$$
to capture the now confirmed fact that the car chose to move into the intersection (rather than say, had a brake failure). Note that to do so the investigator needs only to know the input constraints $\varphi$ and the specific decision variable $\code{move}$. All other aspects of the self-driving car's decision logic is hidden by the oracle interface, and the output is clear and interpretable answer to exactly the question posed. Adaptively, the investigator might then decide to skip Equation~\ref{eqn:cfact}, and instead move on to querying using $\varphi''$ from Equation~\ref{eqn:punc}, \emph{e.g.},
$$q_2 = (\mathbf{1}_{\exists}, \varphi'', \texttt{move} = 0)$$
to ask whether under the family of counterfactuals $T^{cf}_{\textsf{ctx}}$ defined by $\varphi''$ there exists a circumstance where the car would not have entered the intersection. If then, for example, $r_2 = (1, \, \mathcal{M})$, where the model $\mathcal{M}$ encodes a concrete counterfactual scenario, the investigator can update
$$\textsc{Facts} \leftarrow \textsc{Facts} \cup \{ (\mathcal{M}, \varphi'' \cprop \texttt{move} = 0) \}$$
and continue on from there.

\paragraph{The Method.} We define the \acronym{} loop and oracle interface that together underlie \soid{} in Algorithms~\ref{alg:ca} and~\ref{alg:oracle}, where calls in \textit{italics?} indicate manual interventions that must be made by the investigator. The investigatory procedure starts from the set of factual traces $T^f = \{ \tau_1^f, \, \ldots, \, \tau_k^f \}$ observed from $A$'s executions. At each iteration of the loop, the investigator adaptively formulates and poses the next question in a sequence $\textsc{Query} = \langle q_1, \, \ldots, \, q_i, \, \ldots \rangle$. The responses $\textsc{Resp} = \langle r_1, \, \ldots, \, r_i, \, \ldots \rangle$ are then used to build up the set $\textsc{Facts}$ of trace properties regarding $A$'s decision making under both the $T^f$ and the set of counterfactual scenarios, $T^{cf} = \{ \tau_1^{cf}, \, \ldots, \, \tau_{k'}^{cf} \}$, defined within the $q_i$ by the investigator. Each entry in $\textsc{Facts}$ is rigorously proven by the verification oracle $\mathcal{O}_{A}(\cdot)$, with access to the logical representation $\Pi$ of $\mathcal{A}$ as expressed by $A$. We leave to the investigator the decision to terminate the investigatory loop, as well as any final judgement as to the agent's culpability. In~Section~\ref{sec:rsandqs} we explain the encodings $\Phi$ used within Algorithm~\ref{alg:oracle} in detail, and further prove that they correctly implement the semantics of $\fprop$ and $\cprop$ as defined in~Section~\ref{subsec:cr}.

\paragraph{Design Goals.} We briefly highly how  \acronym{} and \soid{} meet some critical design goals to support principled analysis for legal accountability.

\begin{enumerate}
    \itemsep0.3em
    \item The oracle design pushes the technical details of how $\mathcal{A}$ works `across the veil', so that an investigator needs to know no more than the meaning of the input/output API exposed by $A$ (over the variables in $I$ and some subset of $D$, respectively) in order to construct a query $q_i$ and interpret the response $r_i \leftarrow \mathcal{O}_{A}(q_i)$. To this end, we designed the oracle query to place as minimal a possible burden on the investigator.    
    \item The method emphasizes \emph{adaptive} construction of $\textsc{Facts}$, so that the investigator may shape the $i$th query not just by considering the questions $\langle q_1, \, \ldots, \, q_{i-1} \rangle$ asked, but also using the responses $\langle r_1, \, \ldots, \, r_{i-1} \rangle$ already received. We aim to put the agent on the stand, not just send it a questionnaire. Crucial to this goal is to return concrete traces from counterfactual queries, so that their corresponding facts can help guide the construction of the next. Using the ability of SMT solvers to return models for satisfiable formulas, when $\mathbf{1}_{\exists} = 1$ and there exists $\tau^{cf} \in T^{cf}_{\textsf{ctx}}$ such that $\tau^{cf} \models \varphi(\hat{I}) \cprop \beta(\hat{D})$, we are able to explicitly inform the investigator of the fact $(\tau^{cf}, \, \varphi(\hat{I}) \cprop \beta(\hat{D}))$. Conversely, when $\mathbf{1}_{\exists} = 0$ and $\varphi(\hat{I}) \cprop \beta(\hat{D})$ is not true for all $\tau^{cf} \in T^{cf}_{\textsf{ctx}}$, we also can explicitly return the fact $(\overline{\tau^{cf}}, \, \varphi(\hat{I}) \cprop \lnot \beta(\hat{D}))$ for some counterexample $\overline{\tau^{cf}} \in T^{cf}_{\textsf{ctx}}$ encoded by the output model $\mathcal{M}$.    
    \item The method is \emph{interpretable}. When an investigator poses a question, \soid{} pushes everything `smart' the method does across the oracle interface to the verification, so that even non-technical users can understand the relationship between query and response. In a sense, our method benefits from a simple and straightforward design, so that its process is direct and interpretable to the investigators using it. As with a human on the stand, we just want the answer to the question that was asked, no more and no less. As this design goal describes what \emph{not to do} rather than what \emph{to do}, we informally meet it by not introducing unnecessary automation.
\end{enumerate}

\noindent In general, we balance automation against interpretability, in order to minimize the burden on the investigator: we want them to pick a critical moment and (family of) counterfactual scenarios, define a behavior as a post-condition, and get push-button execution. 

\begin{algorithm}[t]
\caption{Oracle}\label{alg:oracle}
\begin{algorithmic}[1]
\Require $q_i = (\varphi, \, \beta) \text{ or } q_i = (\mathbf{1}_{\exists}, \, \varphi, \, \beta, \, F)$
\State $(ct, \, \{ \pi_{ab} \} ) \leftarrow \textsf{SymExec}(A, \, \varphi(\hat{I}))$ \Comment{$ct$ is the branch count}
\State $\Pi \leftarrow \bigvee_{a \in [n], \, b \in [ct]} \pi_{ab}$
\If{$|q| = 2$}
\State $\Phi \leftarrow \lnot \big(\big(\varphi(\hat{I}) \land \Pi(\hat{V})\big) \to \beta(\hat{D})\big)$
\State $(b, \, \mathcal{M}) \leftarrow \textsf{SMT.isValid?}(\Phi)$
\Else
\If{$q.\textsf{fst} = 0$}
\State $\Phi \leftarrow \lnot \big(\big(\varphi(\hat{I}) \land \lnot F(\hat{I}) \land \Pi(\hat{V})\big) \to \beta(\hat{D})\big)$
\State $(b, \, \mathcal{M}) \leftarrow \textsf{SMT.isValid?}(\Phi)$
\Else
\State $\Phi \leftarrow \big(\varphi(\hat{I}) \land \lnot F(\hat{I}) \land \Pi(\hat{V})\big) \, \land$ 
\State \quad\quad\quad\quad\quad\quad\quad\quad $\big(\big(\varphi(\hat{I}) \land \lnot F(\hat{I}) \land \Pi(\hat{V})\big) \to \beta(\hat{D})\big)$
\State{$(b, \, \mathcal{M}) \leftarrow \textsf{SMT.isSat?}(\Phi)$}
\EndIf
\EndIf \\
\Return $r_i = (b, \, \mathcal{M})$
\end{algorithmic}
\end{algorithm}

\subsection{Counterfactual-Guided Abstraction Refinement}\label{subsec:car}

Automated \emph{counterexample-guided abstraction refinement} (CEGAR)~\cite{clarke2000counterexample} is a tentpole of modern formal methods research, underlying many recent advances in program synthesis and analysis. The premise of CEGAR is that instead of trying to prove a safety property $\varphi$ of a program $A$ directly, instead $\varphi$ can be checked against an abstraction $\hat{A}$ that over-approximates the feasible traces of $A$. If the checking of $\varphi$ fails due to a counterexample trace $\overline{\tau} \in \hat{A}$ (\emph{i.e.}, if $\overline{\tau} \not\models \varphi$), the method determines if $\overline{\tau}$ is spurious or real. If real, the property is proven false; if spurious, the abstraction $\hat{A}$ is refined in order to exclude $\overline{\tau}$ explicitly, and the checking restarts. This iterative process of checking and refinement until the property is proven is the \emph{CEGAR loop}. The strength of CEGAR lies in automated discovery of relevance: $\hat{A}$ begins small, and only grows when necessary to accommodate functionality of $A$ that is relevant to its correctness under $\varphi$.

Formally, an abstraction is given by a surjection $h: \mathcal{D} \to \hat{\mathcal{D}}$, where $\hat{\mathcal{D}}$ is an \emph{abstract} domain. Our overloading of the $\hat{\cdot}$ notation from the symbolic domain is intentional. This abstraction function induces an equivalence relation: for $\sigma, \, \sigma' \in \hat{\mathcal{D}}$, $\sigma \equiv \sigma' \text{ iff } h(\sigma) = h(\sigma')$, as well as an over-approximate abstract transition relation over these equivalence classes:
$$(\hat{\sigma}, \, \hat{\sigma}') \in \hat{R} \subseteq \hat{\mathcal{D}} \times \hat{\mathcal{D}} \text{ iff } \exists \sigma, \sigma'\in \mathcal{D}. \; h(\sigma) = \hat{\sigma} \text{ and } h(\sigma') = \hat{\sigma}' \text{ and } (\sigma, \, \sigma') \in R.$$ 
Spurious traces appear because of the equivalence of states produced by a coarse abstraction. If there exists a transition $\sigma \mapsto \sigma'$, and $\sigma \equiv \sigma''$, then $\hat{R}$ includes a transition $\sigma'' \mapsto \sigma'$ within $\hat{A}$ even if it does not exist in $A$ itself. If a spurious counterexample $\overline{\tau}$ includes a transition $\sigma'' \mapsto \sigma'$ along its execution, the refinement process automatically splits $\sigma \not\equiv \sigma''$ to remove this trace.

Like CEGAR, the goal of our \acronym{} is to automatically analyze only relevant executions of a program, in order to efficiently determine its behavior. Unlike in CEGAR, we cannot automate this process completely, because we are unable (by assumption) to assume computable models of either relevance or of correctness. However, the functionality of \acronym{} can be interpreted as providing semi-automated abstraction refinement, based on the (counter)factuals it both analyzes as posed and generates in response to queries.

\begin{figure}[t]
\input{code/pr.tex}
\caption{Longitudinal Proper Response Guard (see Def. 4.1 of~\cite{shalev2017formal}) written in C. Although not used by the simulated self-driving cars of our case study, this sort of decision functionality is representative of what \soid{} is designed to investigate.}
\label{fig:long-pr-example}
\vspace{-7pt}
\end{figure}

Interpreting Algorithm~\ref{alg:ca} as a \emph{counterfactual-guided abstraction refinement} loop follows from viewing the investigator's finding of fact as an iterative refinement of $h$, starting from the trivial abstraction. The investigator begins with no formal understanding of the executions of $A$, \emph{i.e.}, $\textsf{traces}(\textsc{Facts}) = \varnothing$. This implicitly defines an $\hat{A}$ mapping to $\hat{\mathcal{D}} = \{ (\{ \bot_1 \} \times \cdots \times \{ \bot_n \}, \, \textsc{True} ) \} = \{ \hat{\sigma}_{\bot} \}$ where each $\bot_{i}$ is some default value from $\mathcal{D}_i$, and where for all $\sigma \in \mathcal{D}$, $h(\sigma) = \hat{\sigma}_{\bot}$ and $(\hat{\sigma}_{\bot}, \, \hat{\sigma}_{\bot}) \in \hat{R}$. Since this $\hat{A}$ is over a single $\bot$ equivalence class where $\sigma \equiv \sigma'$ for all $\sigma, \, \sigma' \in D$, every trace over $D$ is feasible within the abstracted $\hat{A}$, and so every possible behavior $\beta(\hat{D})$ is the plausible outcome of an execution of $A$. 

Just as when questioning a human, by asking questions and receiving answers -- the \acronym{} loop -- the investigator iteratively clarifies their understanding of the agent's decision making. As a high-level sketch, consider an investigation into the behavior of a program $A$ invoking the autonomous vehicle control code in Figure~\ref{fig:long-pr-example}, as shown in Figure~\ref{fig:CLEAR}. When the investigator poses a counterfactual query $q_i$, the resulting symbolic execution follows a set of symbolic transitions $\hat{R}_{q_i} \subseteq \hat{\mathcal{D}} \times \hat{\mathcal{D}}$ to visit a set of symbolic states $\{ \hat{\sigma_{t^*}}, \, \ldots, \, \hat{\sigma}_{\ell} \}$. The same is true of factual queries, albeit where each $\hat{\sigma_j}$ is equivalent to a corresponding concrete state $\sigma_j$, \emph{i.e.}, $\sigma_j \models \hat{\sigma_j}$ uniquely. In the figure we use such symbolic and concrete states interchangeably. $\hat{R}_{q_i}$ and the $\hat{\sigma_j}$ can be used to refine $\hat{A}$. If a symbolic state is distinct from $\hat{A}$, that is, if $\hat{\sigma_j} \sqcap \hat{\sigma}'$ for all $\hat{\sigma}' \in \hat{\mathcal{D}}$, then $\hat{\mathcal{D}} \leftarrow \hat{\mathcal{D}} \cup \{ \hat{\sigma_j} \}$ directly. When the meet is not empty, a new symbolic state can be generated by conjoining the variable expressions and path constraints of $\hat{\sigma_j}$ and $\hat{\sigma}'$ that concretizes to $\gamma(\hat{\sigma_j}) \cap \gamma(\hat{\sigma}')$. The symbolic transitions in $\hat{R}_{q_i}$ can then be added to $\hat{R}$ as appropriate to maintain the over-approximation.

At the moment, the observation that the product of iterative symbolic executions can be related to abstraction refinement -- as has been previously observed in the other direction, see \emph{e.g.},~\cite{beyer2016symbolic} -- is purely theoretical for \acronym{}. Nonetheless, we find this abstraction refinement perspective to have great potential. In normal CEGAR, the trace and trace formulas used during refinement are spurious -- instead of being added to the abstraction directly, the abstraction is refined by using them to find predicates that exclude them. Finding the optimal refinement is NP-hard~\cite{clarke2000counterexample}, so numerous methods have been developed to efficiently and effectively extract powerful predicates that capture relevant program behavior~\cite{jhala2009software}. For example, prior work combining CEGAR and symbolic execution applies interpolation methods~\cite{beyer2016symbolic}. A potential direction for future work might be to determine whether such algorithms can be adapted to automatically produce higher-level abstract representations of critical program behavior useful to \acronym{}. Such an extension might be able to, \emph{e.g.}, synthesize counterfactual families that capture a set of concrete states recurring across a number of queries, in order to suggest a counterfactual generation query primed to produce a particularly informative example. By using abstraction refinement, we believe \acronym{} might be enhanced for better explanations without unbalancing the interpretability vs.~automation tradeoff.

\begin{figure}[!t]
\centering
\includegraphics[width=\textwidth]{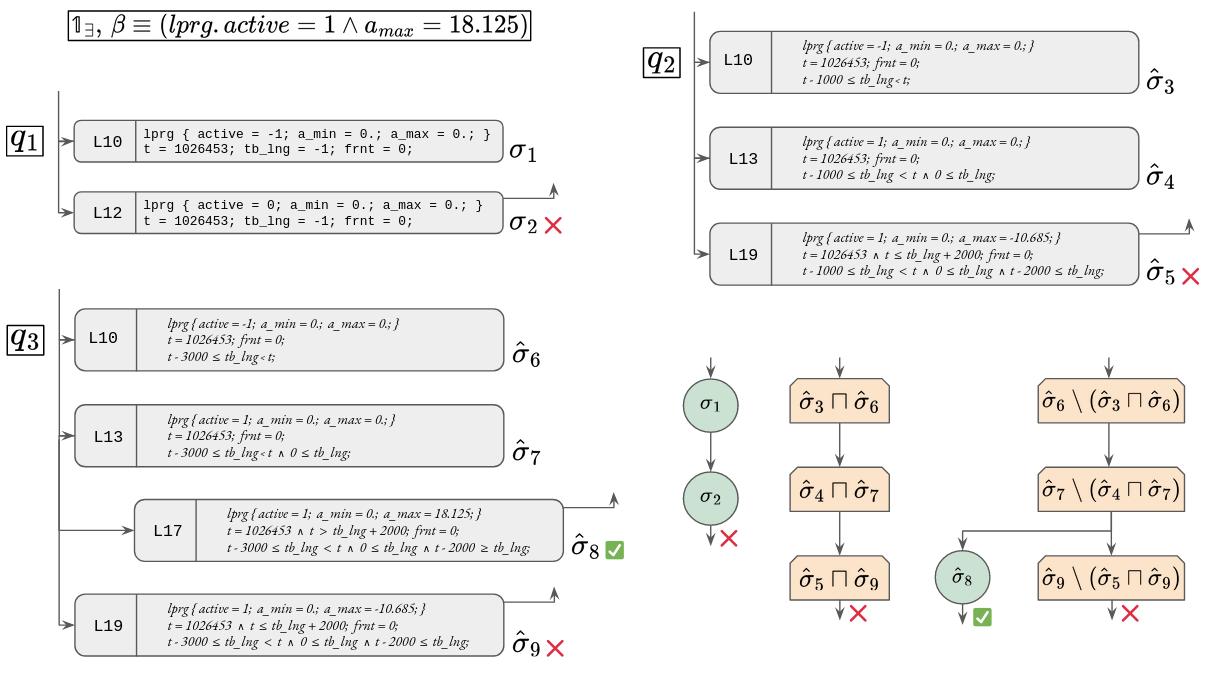}
\caption{Counterfactual-guided Abstraction Refinement of the code in Figure~\ref{fig:long-pr-example}. A factual query $q_1$ and a pair of counterfactual queries, $q_2, \, q_3$, all with the same $\beta$, produce the diagrammed symbolic traces and the resultant abstraction $\hat{A}$ at bottom right. Concrete states are rendered with a {\tt typewriter} font, while symbolic states are rendered in {\garamond Garamond}.}
\label{fig:CLEAR}
\vspace{-7pt}
\end{figure}

\paragraph{An Example.} Figure~\ref{fig:CLEAR} shows the evolution of a counterfactual-guided abstraction refinement over our running example inspecting $\code{lpr}$ from Figure~\ref{fig:long-pr-example}. Through a sequence of three queries $\langle q_1, \, q_2, \, q_3 \rangle$, where the first are factual and the latter two counterfactual, the investigator attempts to find a concrete counterfactual where the acceleration guard will not only be activated, but set to a particular value. As the queries are resolved, the symbolic executions reach nine symbolic states, all but two of which can be concretized into multiple concrete states. The meets of some of these states are non-empty, \emph{e.g.}, $\hat{\sigma}_3 \sqcap \hat{\sigma}_6 \neq \varnothing$, as the conditions of $\hat{\sigma}_6$ imply those of $\hat{\sigma}_3$, so $\gamma(\hat{\sigma}_3) \subseteq \gamma(\hat{\sigma}_6)$. Those meets form distinct symbolic states within the abstraction, with the transition relation capturing the possible executions appropriately.  

\section{Representations and Queries}\label{sec:rsandqs}

We specify a factual query as a tuple $(\varphi, \beta)$, the former logical formula specifying the inputs to the program at a critical decision moment (as in Equation~\ref{eqn:fact}), the latter encoding a description of the possible agent decision being investigated. Implicitly, $\varphi$ defines a factual scenario $(\tau^f, \, t, \, \varphi)$, where $\varphi$ encodes the program state at that critical moment $\tau^f(t)$. Counterfactual queries are encoded similarly, but with the additional of the existential indicator bit $\mathbf{1}_{\exists}$. They are also able to encode many different possible program executions, captured by the notion of a family of counterfactuals $T^{cf}_{\textsf{ctx}}$. The last necessary statement required to invoke an SMT solver is $\Pi(\hat{V})$, the decision logic of $A$ constrained to the scenario(s) implied by $\varphi$. We generate $\Pi(\hat{V})$ dynamically given a (counter)factual query using symbolic execution.

\subsection{Representing Agents and Scenarios}\label{subsec:encodings}

We represent scenarios by tight constraints on the variables in $\hat{I}$ as encoded by $\tau^f(t)$, and counterfactual scenarios as punctured relaxations of $\tau^f(t^*)$. With this restriction, in order to distinguish constraints over $E$ and $S$ we are able to redefine $\varphi(\hat{I}) = \varphi(\hat{E}) \land \psi(\hat{S})$. Working solely with relaxations for specifiable counterfactuals is an ergonomic tradeoff to limit the expressive power of queries, in order to promote adaptive sequential querying of statements amenable to symbolic execution. In principle, \acronym{} supports supports counterfactuals written as arbitrary $\varphi(\hat{I}) \in \texttt{QF\_FPBV}$, and $\soid{}$ could do so as well.  

\paragraph{Factuals.} Factual scenarios are encoded by an (admittedly paradoxical) `tight relaxation', in the sense that every input variable is constrained by an equality.

\begin{defi}\label{def:fact}
A \textbf{factual scenario} is a tuple $(\tau^f, \, t, \, \varphi, \, \psi)$ such that
\begin{enumerate}[i)]
    \item $\varphi(\hat{E}) \land \psi(\hat{S})$ is a relaxation;
    \item $\left.\tau^f(t)\right|_{I} \models \varphi(\hat{E}) \land \psi(\hat{S})$; and
    \item for all $\rsigma_{I} \neq \left.\tau^f(t)\right|_{I}$, $\rsigma_{I} \not\models \varphi(\hat{E}) \land \psi(\hat{S})$.
\end{enumerate}
\end{defi}

\noindent In practice, a factual query is specified by a $(\varphi, \, \psi)$ pair that have a unique satisfying model over $I$. Evaluating a factual scenario is functionally equivalent to a concrete execution, since $\tau^f$ is the only possible program trace. We tie factual analysis into our framework for completeness, and because unlike traditional `opaque' assertion-based testing $\soid{}$ supports writing complex behavioral conditions on all of $V$, including both internal and output variables. Additionally, our factual representations also naturally generate circuits for (zero-knowledge) proofs-of-compliance, another promising tool for algorithmic accountability~\cite{kroll2017accountable,ozdemir2022circ}.

\paragraph{Counterfactuals.} A counterfactual is encoded as a punctured relaxation which removes the original factual $\tau^{f}$ as a valid model, $\varphi(\hat{I}) \equiv \big(\varphi(\hat{E}) \land \psi(\hat{S}) \big) \land \lnot F(\hat{I})$. 

\begin{defi}\label{def:cfact}
\begin{itemize}
    \item[] 
    \item[1.] A \textbf{counterfactual scenario} is a tuple $(\tau^{cf} = (\tau^f, \, \tau^{pp}, \, t^*), \, \varphi, \, \psi, \, F)$ such that 
    \begin{enumerate}[i)]
        \item $\big(\varphi(\hat{E}) \land \psi(\hat{S})\big) \land \lnot F(\hat{I})$ forms a punctured relaxation;
        \item $\left.\tau^{pp}(t^*)\right|_{I} \models \varphi(\hat{E}) \land \psi(\hat{S}) \land \lnot F(\hat{I})$;
        \item $\left.\tau^f(t^*)\right|_{I} \models F(\hat{I})$; and
    \end{enumerate}
    \item[2.] A \textbf{family of counterfactual scenarios} is a tuple $(T^{cf}_{\textsf{ctx}}, \, \varphi, \, \psi, \, F)$ where the set $T^{cf}_{\textsf{ctx}}$ contains every $\tau^{cf} = (\tau^f, \, \tau^{pp}, \, t^*)$ such that $\left.\tau^{pp}\right|_{I} \models \varphi(\hat{E}) \land \psi(\hat{S}) \land \lnot F(\hat{I})$ and $(\tau^{cf}, \, \varphi, \, \psi, \, F)$ is a counterfactual scenario.
\end{itemize}
\end{defi}
\noindent In practice, a counterfactual query is specified by a $(\mathbf{1}_{\exists}, \, \varphi, \, \psi, \, F)$ tuple where $\tau^{f}(t^*)$ is excluded as a model by the negation of the formula $F(\hat{I})$ tightly encoding it.

\paragraph{Behaviors.} A behavior is just an arbitrary formula over $\hat{D}$.

\begin{defi}\label{def:behav}
A \textbf{behavior} is a formula $\beta(\hat{D})$.
\end{defi}

\noindent Let $\textsf{vars}(\beta) \subseteq [n]$ be the set of indices such that $i \in \textsf{vars}(\beta)$ if $\hat{v}_i$ is referenced by $\beta(\hat{D})$. We then define $\left.e_{\hat{\sigma}}\right|_\beta \equiv \bigwedge_{i \in \textsf{vars}(\beta)} \left.e_{\hat{\sigma}}\right|_i$ as the formula describing the assignment constraints of all variables in $\beta$ at a given symbolic state $\hat{\sigma}$.

\paragraph{Decision Logic.} The decision logic of the agent is composed of two sets of interrelated components. The first set of components are the path formulas $\pi_i$ for $i \in [ct]$ generated by a symbolic execution, when constrained by $\varphi(\hat{E}) \land \varphi(\hat{S})$ and up to some maximum step length $\ell_{\textsf{max}}$. This `maximum time' guarantees each symbolic trace $\hat{\tau}_i$ of the symbolic execution will terminate at some state $\hat{\sigma}_{\ell_i}$ for $\ell_i \leq \ell_{\textsf{max}}$. The second set are the $\left.e_{\hat{\sigma}_{\ell_i}}\right|_{\beta}$ describing the assignment constraints for variables in $\beta$ at those terminating states. Composed together, these components produce a formula $\Pi(\hat{V})$ that constrains the possible output values of the variables in $\beta$ given $\varphi(\hat{E}) \land \varphi(\hat{S})$ as preconditions.\footnote{For counterfactual queries we could in theory incorporate $\lnot F(\hat{I})$ into the symbolic execution preconditions, in order to possibly preclude the path containing $\tau^f$ from $\{ \pi_{i}(\hat{V}) \}_{i \in [ct]}$. In practice this only excludes a single path only some of the time and does not alter correctness, so we omit it from our definition.}

\begin{defi}\label{def:agent}
The \textbf{decision logic} of $A$ is a formula 
$$\Pi_{\varphi, \psi}^{\beta, \ell_{\textsf{max}}}(\hat{V}) \equiv \bigvee_{i \in [ct]} \pi_i(\hat{V}) \land \left.e_{\hat{\sigma}_{\ell_i}}\right|_{\beta}$$
where $\{\pi_i(\hat{V})\}_{i \in [ct]}$ is the set of path constraints produced by a symbolic execution of $A$ with $\pi_1 \equiv \varphi(\hat{E}) \land \psi(\hat{S})$ as the initial path constraint and $\ell_{\textsf{max}}$ as the timeout, and where $\hat{\sigma}_{\ell_i}$ is the terminating state of the $i$-th symbolic trace of the execution.
\end{defi}

\noindent For concision, we continue to write $\Pi(\hat{V})$. A symbolic execution engine like KLEE-Float can be coaxed to automatically append each $\left.e_{\hat{\sigma}_{\ell_i}}\right|_{\beta}$ to $\pi_i$ by assuming every $\hat{v}_i$ referenced in $\beta$ is equal to a fresh symbolic variable right after $\hat{v}_i$ is (symbolically) computed. 

\subsection{Resolving (Counter)factual Queries}

Given these representations, it is possible for us to encode the semantics of our factual $(\fprop)$ and counterfactual $(\cprop)$ operators as SMT queries in $\texttt{QF\_FPBV}$. In the following arguments, we assume correctness of symbolic execution, \emph{i.e.}, that $\Pi(\hat{V})$ exactly represents the possible executions of $A$ under $\varphi(\hat{E}) \land \psi(\hat{S})$ up to some time $\ell \leq \ell_{\textsf{max}}$.

\begin{thm}\label{thm:fact}
Let $q_i = ((\varphi, \psi), \beta)$ be a factual query, and let $(\tau^f, \, t, \, \varphi, \, \psi)$ be a corresponding factual scenario. Then
$$\Phi \equiv \big(\varphi(\hat{E}) \land \psi(\hat{S}) \land \Pi(\hat{V})\big) \to \beta(\hat{D})$$
is valid iff $\tau^f \models \varphi(\hat{E}) \land \psi(\hat{S}) \fprop \beta(\hat{D})$.
\end{thm}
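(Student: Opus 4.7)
The plan is to prove the biconditional by leveraging two ingredients: the uniqueness of the input state guaranteed by Definition~\ref{def:fact}, and the assumed correctness of symbolic execution (which says $\Pi(\hat{V})$ exactly represents the executions of $A$ under the precondition $\varphi(\hat{E}) \land \psi(\hat{S})$). The heart of the argument is that, because $\varphi(\hat{E}) \land \psi(\hat{S})$ is a tight relaxation pinning down the unique input valuation $\left.\tau^f(t)\right|_I$, any SMT model of $\varphi \land \psi \land \Pi$ must correspond to the unique concrete continuation of $\tau^f$ executed by $A$ from $\tau^f(t)$. Hence validity of $\Phi$ collapses to the single question of whether $\tau^f$'s decision state at the terminating time satisfies $\beta$.

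For the $(\Leftarrow)$ direction, I would assume $\tau^f \models \varphi(\hat{E}) \land \psi(\hat{S}) \fprop \beta(\hat{D})$, which by the semantics in Section~\ref{subsec:cr} unfolds to $\left.\tau^f(t)\right|_I \models \varphi \land \psi$, $\textsf{scope}(\tau^f, t)$, and $\left.\tau^f(\ell)\right|_D \models \beta$. To show $\Phi$ is valid, take an arbitrary model $M$ of $\varphi(\hat{E}) \land \psi(\hat{S}) \land \Pi(\hat{V})$. By clause (iii) of Definition~\ref{def:fact}, $M$ restricted to $\hat{I}$ must equal $\left.\tau^f(t)\right|_I$. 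By Definition~\ref{def:agent} and the correctness of symbolic execution, $M$ selects some path constraint $\pi_{i^*}$ together with the assignment expressions $\left.e_{\hat{\sigma}_{\ell_{i^*}}}\right|_\beta$; since the input is uniquely fixed, $\pi_{i^*}$ is the unique path consistent with $\tau^f$ and the corresponding variable values agree with $\left.\tau^f(\ell)\right|_D$. Thus $M|_{\hat{D}} \models \beta$, so $\Phi$ is valid.

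For the $(\Rightarrow)$ direction, I would construct an explicit model witnessing factual satisfaction. By hypothesis $\left.\tau^f(t)\right|_I \models \varphi \land \psi$, so by the correctness of symbolic execution exactly one path $\pi_{i^*}$ among $\{\pi_i\}_{i \in [ct]}$ is feasible from this input, terminating at some $\hat{\sigma}_{\ell_{i^*}}$. The corresponding concrete trace is $\tau^f$ itself (restricted to its window of agency), yielding a concrete valuation $M$ satisfying $\varphi(\hat{E}) \land \psi(\hat{S}) \land \Pi(\hat{V})$. Validity of $\Phi$ then forces $M|_{\hat{D}} \models \beta(\hat{D})$, which translates back to $\left.\tau^f(\ell_{i^*})\right|_D \models \beta(\hat{D})$. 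The $\textsf{scope}$ obligation on $\tau^f$ is automatic: as discussed in Section~\ref{subsec:pts}, $R_A$ by definition leaves every $ve_i$ invariant across transitions, so any trace of $A$ satisfies $\textsf{scope}$.

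The main obstacle I anticipate is the bookkeeping around the symbolic-to-concrete correspondence: carefully arguing that the unique-input condition forces every SMT model to ``select'' the same path $\pi_{i^*}$, and that the values assigned to variables in $\hat{D}$ via $\left.e_{\hat{\sigma}_{\ell_{i^*}}}\right|_\beta$ match $\left.\tau^f(\ell_{i^*})\right|_D$. Once the tight-relaxation and symbolic execution correctness assumptions are invoked in the right order, both directions reduce to straightforward unfolding of Definitions~\ref{def:fact} and~\ref{def:agent} together with the semantics of $\fprop$.
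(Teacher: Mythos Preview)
Your proposal is correct and follows essentially the same approach as the paper: both directions hinge on the uniqueness of the input state from Definition~\ref{def:fact}(ii)--(iii), the assumed correctness of symbolic execution to identify models of $\Pi(\hat{V})$ with the concrete execution $\tau^f$, and then reading off $\beta$ at the terminating state. The only cosmetic differences are that the paper case-splits over all valuations $\mathcal{M}$ (showing the LHS of $\Phi$ is false except at $\tau^f(\ell)$) where you instead quantify directly over models of the antecedent, and the paper derives $\textsf{scope}$ from the symbolic-execution correctness assumption whereas you derive it from the constraint on $R_A$ in Section~\ref{subsec:pts}; neither difference is substantive.
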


\begin{proof}
We consider each direction in turn.

\vspace{3mm}

\noindent$\Rightarrow:$ By assumption, \textbf{Definition~\ref{def:fact}(ii)}, and \textbf{Definition~\ref{def:fact}(iii)}, $\left.\tau^f(t)\right|_I \models \varphi(\hat{E}) \land \psi(\hat{S})$ uniquely. Therefore $\Phi$ is true for any valuation $\mathcal{M}$ of $\hat{V}$ for which $\left.\mathcal{M}\right|_I \neq \left.\tau^f(t)\right|_I$, as the LHS of the implication is false. So, the validity of $\Phi$ reduces to its truth when $\left.\mathcal{M}\right|_I = \left.\tau^f(t)\right|_I$.

By assumption $\textsf{scope}(\tau^f, \, t)$ is true, implying $\left.\tau^{f}(\ell)\right|_{I} = \left.\tau^{f}(t)\right|_{I}$. By the correctness of the symbolic execution $\Pi(\hat{V})$ is satisfiable if and only if $\varphi({\hat{E}}) \land \psi(\hat{S})$ is as well, and in particular $\tau^{f}(\ell) \models \Pi(\hat{V})$ uniquely for some $\ell \leq \ell_{\textsf{max}}$. Therefore for any valuation $\mathcal{M} \neq \tau^f(\ell)$ it follows that $\mathcal{M} \not\models \Pi(\hat{V})$, and so $\Phi$ is true as the LHS of the implication is again false. That leaves only the case of $\tau^{f}(\ell)$. As we have already shown $\tau^{f}(\ell) \models \varphi(\hat{E}) \land \psi(\hat{S}) \land \Pi(\hat{V})$, and as by assumption further $\left.\tau^{f}(\ell)\right|_{D} \models \beta(\hat{D})$, we may conclude that $\tau^{f}(\ell) \models \Phi$.

\vspace{3mm}

\noindent$\Leftarrow$: By the validity of $\Phi$, $\tau^f(\ell') = \mathcal{M}' \models \Phi$ for every $\ell'$. First, by the correctness of the symbolic execution, $\left.\tau^f(\ell')\right|_{I} = \left.\tau^f(t)\right|_{I}$ always, implying $\textsf{scope}(\tau^f, \, t)$ is true. We show there exists some $\ell$ such that $\tau^f(\ell) = \mathcal{M} \models \varphi(\hat{E}) \land \psi(\hat{S}) \land \Pi(\hat{V})$, from which it follows that $\left.\tau^{f}(\ell)\right|_{D} \models \beta(\hat{D})$, allowing us to conclude that $\tau^f \models \varphi(\hat{E}) \land \psi(\hat{S}) \fprop \beta(\hat{D})$.

That there exists some unique $\ell \leq \ell_{\textsf{max}}$ such that $\tau^{pp}(\ell) = \mathcal{M} \models \Pi(\hat{V})$ follows from the correctness of the symbolic execution. Finally, since $\left.\tau^{f}(\ell)\right|_{I} = \left.\tau^{f}(t)\right|_{I}$ by \textbf{Definition~\ref{def:fact}(ii)} we have $\left.\tau^{f}(t)\right|_{I} = \left.\tau^{f}(\ell)\right|_{I} = \left.\mathcal{M}\right|_{I} \models \varphi(\hat{E}) \land \psi(\hat{S})$. 
\end{proof}

\noindent Notice that our proof is unaltered by dropping property \textbf{Definition~\ref{def:fact}(i)} from the definition, since as noted the restriction to relaxations is an ergonomic decision.

\begin{thm}\label{thm:cfact}
Let $q_i = (\mathbf{1}_{\exists}, (\varphi, \psi), \beta)$ be a counterfactual query, and let $(T^{cf}_{\textsf{ctx}}, \varphi, \, \psi, \, F)$ be a corresponding family of counterfactual scenarios. Then
\begin{enumerate}[i)]
    \item for $\lnot\mathbf{1}_{\exists}$
    $$\Phi \equiv \big(\varphi(\hat{E}) \land \psi(\hat{S}) \land \lnot F(\hat{I}) \land \Pi(\hat{V})\big) \to \beta(\hat{D})$$
    is valid iff $\forall \tau^{cf} \in T^{cf}_{\textsf{ctx}}. \,\, \tau^{cf} \models \varphi(\hat{E}) \land \psi(\hat{S}) \land \lnot F(\hat{I}) \cprop \beta(\hat{D})$; and
    \item for $\mathbf{1}_{\exists}$
    $$\Phi \equiv \big(\varphi(\hat{E}) \land \psi(\hat{S}) \land \lnot F(\hat{I}) \land \Pi(\hat{V})\big) \land \\ \big(\big(\varphi(\hat{E}) \land \psi(\hat{S}) \land \lnot F(\hat{I}) \land \Pi(\hat{V})\big) \to \beta(\hat{D})\big)$$
    is satisfiable iff $\exists \tau^{cf} \in T^{cf}_{\textsf{ctx}}. \,\, \tau^{cf} \models \varphi(\hat{E}) \land \psi(\hat{S}) \land \lnot F(\hat{I}) \cprop \beta(\hat{D})$.
\end{enumerate}
\end{thm}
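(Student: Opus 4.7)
The plan is to adapt the structure of the proof of Theorem~\ref{thm:fact} to the counterfactual setting, with the principal adjustment being that we now range over a family of traces rather than the unique factual trace. For both parts, the backbone of the argument is the assumed correctness of the symbolic execution, which ensures $\Pi(\hat{V})$ is satisfiable exactly by the reachable terminating states of $A$ whose inputs satisfy $\varphi(\hat{E}) \land \psi(\hat{S})$; restricting to inputs also satisfying $\lnot F(\hat{I})$ excludes precisely the factual continuation $\tau^f$ from $t^*$ onward, so the satisfying models of $\varphi \land \psi \land \lnot F \land \Pi$ are in bijection with the possible terminating states of past-possibility traces $\tau^{pp}$ that give valid $\tau^{cf} \in T^{cf}_{\textsf{ctx}}$ (admissibility being automatic, as explained below).

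For part (i), the approach is the standard equivalence between validity of an implication over a family of admissible models and universal truth of the consequent on that family. In the $\Rightarrow$ direction, I would fix an arbitrary $\tau^{cf} \in T^{cf}_{\textsf{ctx}}$, use Definition~\ref{def:cfact}(1)(ii) to show $\left.\tau^{pp}(t^*)\right|_I \models \varphi(\hat{E}) \land \psi(\hat{S}) \land \lnot F(\hat{I})$, invoke correctness of symbolic execution to get some $\ell \le \ell_{\textsf{max}}$ with $\tau^{pp}(\ell) \models \Pi(\hat{V})$, and then conclude $\left.\tau^{pp}(\ell)\right|_D \models \beta(\hat{D})$ from validity of $\Phi$; Definition~\ref{def:cfact}(1)(iii) supplies $\left.\tau^f(t^*)\right|_I \not\models \varphi \land \psi \land \lnot F$, so the semantic clause of $\cprop$ from Section~\ref{subsec:cr} is satisfied. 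For the $\Leftarrow$ direction, I would take any model $\mathcal{M}$ of $\varphi \land \psi \land \lnot F \land \Pi$, read off from $\mathcal{M}$ a corresponding $\tau^{pp}$ via symbolic execution correctness, observe that the resulting $\tau^{cf} = (\tau^f, \tau^{pp}, t^*)$ lies in $T^{cf}_{\textsf{ctx}}$, and apply the universal hypothesis to obtain $\mathcal{M} \models \beta(\hat{D})$.

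For part (ii), I would first note the logical simplification $A \land (A \to B) \equiv A \land B$, so $\Phi$ is satisfiable iff $\varphi(\hat{E}) \land \psi(\hat{S}) \land \lnot F(\hat{I}) \land \Pi(\hat{V}) \land \beta(\hat{D})$ is satisfiable. The $\Rightarrow$ direction then extracts a satisfying model $\mathcal{M}$, constructs the induced $\tau^{pp}$ (and hence $\tau^{cf}$) as above, and reads off the witnessing counterfactual. The $\Leftarrow$ direction reverses this, using the existential witness $\tau^{cf}$ to produce a model of the conjunction. The $\textsf{scope}(\tau^{pp}, t^*)$ requirement for admissibility is discharged exactly as in Theorem~\ref{thm:fact}: symbolic execution by construction never updates environment variables, so any trace it produces satisfies the scope predicate; non-backtracking is built into the definition of $\tau^{cf}$.

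The main subtlety, and the part that deserves the most care, is the bookkeeping between satisfying models of the SMT formula and the full past-possibility traces $\tau^{pp}$ required by the semantics of $\cprop$. A satisfying assignment to $\hat{V}$ yields a single symbolic state at some step $\ell$, not a sequence; I would appeal to Definition~\ref{def:agent} and the construction of $\Pi$ as a disjunction over path constraints $\pi_i \land \left.e_{\hat{\sigma}_{\ell_i}}\right|_\beta$ to show that any such model uniquely identifies a symbolic trace $\hat{\tau}_i$, and hence a concrete trace $\tau^{pp}$ agreeing with $\tau^f$ up to $t^*-1$ (so $\textsf{nbt}$ holds) and evolving under $R_A$ thereafter (so $\textsf{scope}$ holds by the environment-invariance of $R_A$). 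Once this correspondence is made precise, both directions of both parts follow almost verbatim from the corresponding arguments of Theorem~\ref{thm:fact}, with the punctured-relaxation clause $\lnot F(\hat{I})$ simply substituted for the tightness condition of Definition~\ref{def:fact}(iii).
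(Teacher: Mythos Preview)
Your proposal is correct and follows essentially the same approach as the paper: establish, via the assumed correctness of symbolic execution, the correspondence between models of $\varphi \land \psi \land \lnot F \land \Pi$ and terminating states $\tau^{pp}(\ell)$ of admissible counterfactuals in $T^{cf}_{\textsf{ctx}}$, then read off the validity/universal and satisfiability/existential equivalences directly. Your explicit simplification $A \land (A \to B) \equiv A \land B$ for part~(ii) and your focused discussion of the model-to-trace bookkeeping are slight presentational improvements over the paper, which instead argues part~(i) by an exhaustive case split on valuations $\mathcal{M}$ (those with $\left.\mathcal{M}\right|_I$ outside any $\left.\tau^{pp}(t^*)\right|_I$, those matching $\left.\tau^f(t^*)\right|_I$, and those matching some $\left.\tau^{pp}(\ell)\right|_I$); but the substance is the same.
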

\begin{proof}
We consider each direction in turn for both query types.

\vspace{3mm}

\noindent$\lnot\mathbf{1}_{\exists}, \, \Rightarrow:$ Let $\tau^{cf} \in T^{cf}_{\textsf{ctx}}$. Recall $\tau^{cf} = (\tau^f, \tau^{pp}, t^*)$. By assumption and \textbf{Definition~\ref{def:cfact}.1(ii)}, $\left.\tau^{pp}(t^*)\right|_I \models \varphi(\hat{E}) \land \psi(\hat{S}) \land \lnot F(\hat{I})$, and $T^{cf}_{\textsf{ctx}}$ contains every such $\tau^{cf}$ by $\textbf{Definition~\ref{def:cfact}.2}$. Therefore $\Phi$ is true for any valuation $\mathcal{M}$ of $\hat{V}$ for which $\left.\mathcal{M}\right|_I \neq \left.\tau^{pp}(t^*)\right|_I$ for every $\tau^{cf} \in T^{cf}_{\textsf{ctx}}$, as the LHS of the implication is false. Moreover \textbf{Definition~\ref{def:cfact}.1(iii)} implies that $\left.\tau^{f}(t^*)\right|_I \not\models F(\hat{I})$, and so for any $\mathcal{M}$ for which $\left.\mathcal{M}\right|_I = \left.\tau^{f}(t^*)\right|_I$ the LHS of the implication is again false. Together, the validity of $\Phi$ reduces to its truth in the case that $\left.\mathcal{M}\right|_I = \left.\tau^{pp}(t^*)\right|_I \neq \left.\tau^{f}(t^*)\right|_I$ for some $\tau^{cf} \in T^{cf}_{\textsf{ctx}}$.

By the correctness of the symbolic execution $\Pi(\hat{V})$ is satisfiable if and only if $\varphi({\hat{E}}) \land \psi(\hat{S})$ is as well, and in particular $\tau^{pp}(\ell) \models \Pi(\hat{V})$ for some $\ell \leq \ell_{\textsf{max}}$ for every $\tau^{cf} \in T^{cf}_{\textsf{ctx}}$, or $\tau^{f}(\ell) \models \Pi(\hat{V})$. If the latter, $\tau^f(\ell) \not\models \lnot F(\hat{I})$ by \textbf{Definition~\ref{def:cfact}.1(i)} and \textbf{Definition~\ref{def:prelax}}, and so $\Phi$ is true as the LHS of the implication is again false. Otherwise, if instead the former, we split into three cases, where $\tau^{cf}, \, \tau'^{cf} \in T^{cf}_{\textsf{ctx}}$ are arbitrary but distinct counterfactuals.

\begin{enumerate}[a)]
    \item if WLOG $\left.\mathcal{M}\right|_{I} \neq \left.\tau^{pp}(\ell)\right|_{I}$ or $\left.\mathcal{M}\right|_{D} \neq \left.\tau'^{pp}(\ell')\right|_{D}$, it follows that $\mathcal{M} \not\models \Pi(\hat{V})$, and so $\Phi$ is true as the LHS of the implication is again false.
    \item By assumption that $\textsf{scope}(\tau^{pp}, \, t^*)$ and $\textsf{scope}(\tau'^{pp}, \, t^*)$ are true and $\tau^{cf}$ and $\tau'^{cf}$ are distinct, $\left.\tau^{pp}(\ell)\right|_{I} = \left.\tau^{pp}(t^*)\right|_{I} \neq \left.\tau'^{pp}(t^*)\right|_{I} = \left.\tau'^{pp}(\ell')\right|_{I}$. As such, WLOG $\left.\mathcal{M}\right|_{I} = \left.\tau^{pp}(\ell)\right|_{I}$ and $\left.\mathcal{M}\right|_{D} = \left.\tau'^{pp}(\ell')\right|_{D}$ means $\mathcal{M} \not\models \Pi(\hat{V})$ by the correctness of the symbolic execution, and so $\Phi$ is true as the LHS of the implication is again false.
    \item if $\mathcal{M} = \tau^{pp}(\ell)$, then as by assumption $\left.\tau^{pp}(\ell)\right|_{D} \models \beta(\hat{D})$, we may conclude that $\tau^{pp}(\ell) \models \Phi$.
\end{enumerate}

\vspace{3mm}

\noindent$\lnot\mathbf{1}_{\exists}, \, \Leftarrow$: By the validity of $\Phi$, $\tau^{pp}(\ell') = \mathcal{M}' \models \Phi$ for every $\ell'$ for every $\tau^{cf} \in T^{cf}_{\textsf{ctx}}$. First, by the correctness of the symbolic execution, $\left.\tau^{pp}(\ell')\right|_{I} = \left.\tau^{pp}(t^*)\right|_{I}$ always, implying $\textsf{scope}(\tau^{pp}, \, t^*)$ is true. Second, as $\Phi$ does not constrain any $t' < t^*$, $\textsf{nbt}(\tau^{cf})$ follows trivially, and so $\textsf{admit}(\tau^{cf})$ is true as well. We show that for every $\tau^{cf} \in T^{cf}_{\textsf{ctx}}$ there exists some $\ell$ such that $\tau^{pp}(\ell) = \mathcal{M} \models \varphi(\hat{E}) \land \psi(\hat{S}) \land \lnot F(\hat{I}) \land \Pi(\hat{V})$, from which it follows that $\left.\tau^{pp}(\ell)\right|_{D} \models \beta(\hat{D})$, allowing us to conclude that $\tau^{cf} \models \varphi(\hat{E}) \land \psi(\hat{S}) \land \lnot F(\hat{I}) \cprop \beta(\hat{D})$.

That there exists some unique $\ell \leq \ell_{\textsf{max}}$ such that $\tau^{pp}(\ell) = \mathcal{M} \models \Pi(\hat{V})$ follows from the correctness of the symbolic execution. By \textbf{Definition~\ref{def:cfact}(i)}, \textbf{Definition~\ref{def:cfact}(iii)}, and \textbf{Definition~\ref{def:prelax}} we have $\left.\tau^{f}(t^*)\right|_{I} \models F(\hat{I})$ uniquely, and so $\left.\tau^{f}(t^*)\right|_{I} \not\models \varphi(\hat{E}) \land \psi(\hat{S}) \land \lnot F(\hat{I})$. Finally, since $\left.\tau^{pp}(\ell)\right|_{I} = \left.\tau^{pp}(t^*)\right|_{I}$ by \textbf{Definition~\ref{def:cfact}(ii)} we have $\left.\tau^{pp}(t^*)\right|_{I} = \left.\tau^{pp}(\ell)\right|_{I} = \left.\mathcal{M}\right|_{I} \models \varphi(\hat{E}) \land \psi(\hat{S}) \land \lnot F(\hat{I})$.

\vspace{3mm} 

\noindent$\mathbf{1}_{\exists}, \, \Rightarrow$: Let $\tau^{cf} \in T^{cf}_{\textsf{ctx}}$. Recall $\tau^{cf} = (\tau^f, \tau^{pp}, t^*)$. By assumption and \textbf{Definition~\ref{def:cfact}.1(ii)}, $\left.\tau^{pp}(t^*)\right|_I \models \varphi(\hat{E}) \land \psi(\hat{S}) \land \lnot F(\hat{I})$, and that $\textsf{scope}(\tau^{pp}, \, t^*)$ is true, implying $\left.\tau^{pp}(\ell)\right|_{I} = \left.\tau^{pp}(t^*)\right|_{I}$. By the correctness of the symbolic execution $\Pi(\hat{V})$ is satisfiable if and only if $\varphi({\hat{E}}) \land \psi(\hat{S})$ is as well, and in particular $\tau^{pp}(\ell) \models \Pi(\hat{V})$ for some $\ell \leq \ell_{\textsf{max}}$. Moreover, by assumption $\left.\tau^{pp}(\ell)\right|_{D} \models \beta(\hat{D})$, and so we may conclude $\tau^{pp}(\ell) \models \Phi$.

\vspace{3mm}

\noindent$\mathbf{1}_{\exists}, \, \Leftarrow$: Let $\mathcal{M} \models \Phi$. As $\left.\mathcal{M}\right|_{I} \models \varphi(\hat{E}) \land \psi(\hat{S}) \land \lnot F(\hat{I}) \land \Pi(\hat{V})$, by $\textbf{Definition~\ref{def:cfact}.2}$ and the correctness of symbolic execution it follows that $\mathcal{M}$ encodes a $\tau^{pp}(\ell)$ for some $\tau^{cf} = (\tau^f, \tau^{pp}, t^*) \in T^{cf}_{\textsf{ctx}}$  for some $\ell \leq \ell_{\textsf{max}}$. Moreover, by the correctness of the symbolic execution, $\left.\tau^{pp}(\ell)\right|_{I} = \left.\tau^{pp}(t^*)\right|_{I}$, implying $\textsf{scope}(\tau^{pp}, \, t^*)$ is true. As $\Phi$ does not constrain any $t' < t^*$, $\textsf{nbt}(\tau^{cf})$ follows trivially, and so $\textsf{admit}(\tau^{cf})$ is true as well. Finally, by assumption, $\left.\mathcal{M}\right|_{D} \models \beta(\hat{D})$ and $\left.\mathcal{M}\right|_{I} \not\models F(\hat{I})$, allowing us to conclude that $\tau^{cf} \models \varphi(\hat{E}) \land \psi(\hat{S}) \land \lnot F(\hat{I}) \cprop \beta(\hat{D})$.
\end{proof}
\noindent Similarly, our proof is unaltered by dropping property \textbf{Definition~\ref{def:cfact}(i)} from the definition.

\begin{figure*}[!t]
    \centering
    \begin{subfigure}[c]{0.9\textwidth}
        \centering
        \includegraphics[width=0.80\textwidth]{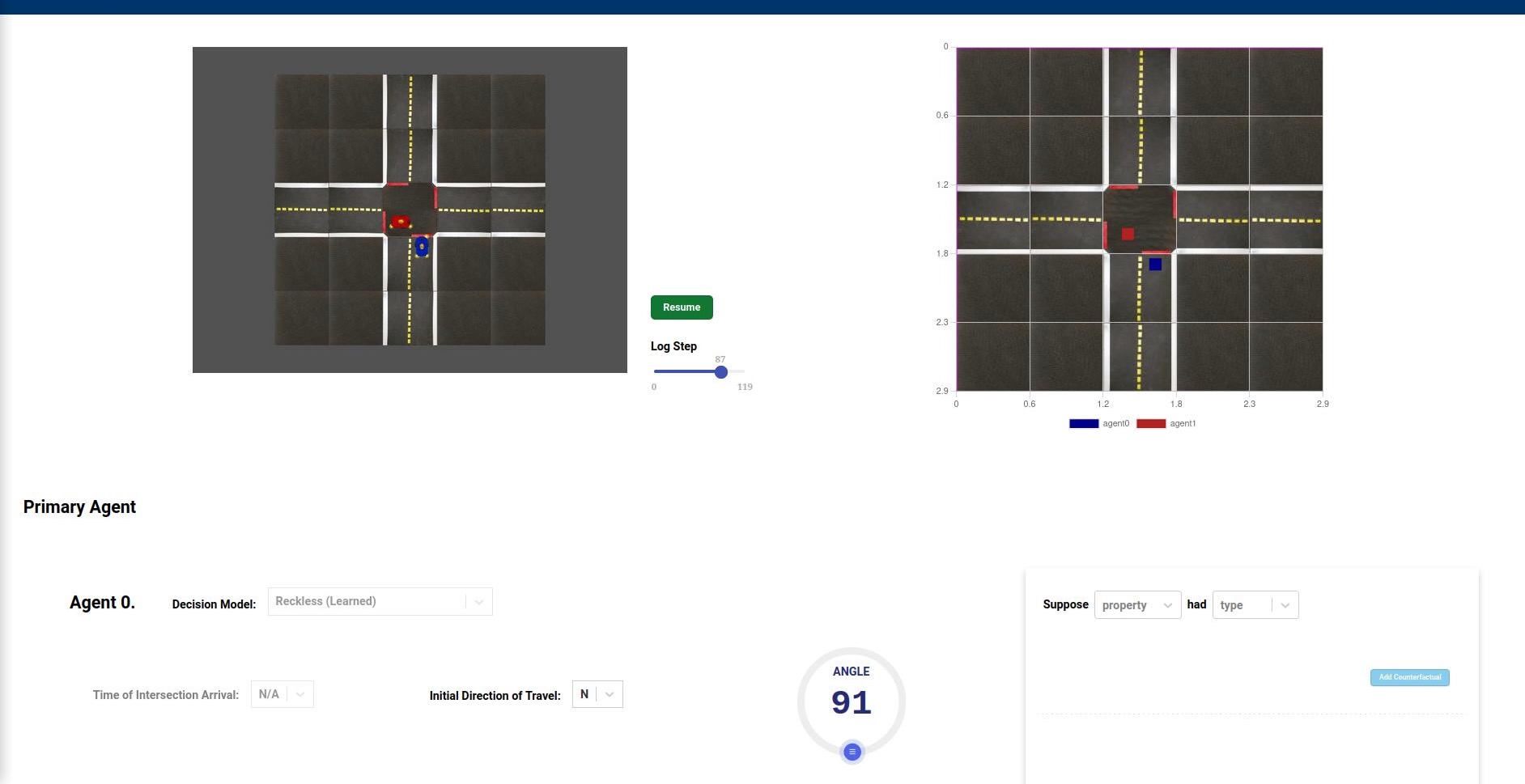}
    \end{subfigure}
    \vspace{6mm}
    \hspace{0.0mm}
    \begin{subfigure}[c]{0.9\textwidth}
        \centering
        \includegraphics[width=0.80\textwidth,trim={0cm 4cm 0cm 0cm},clip]{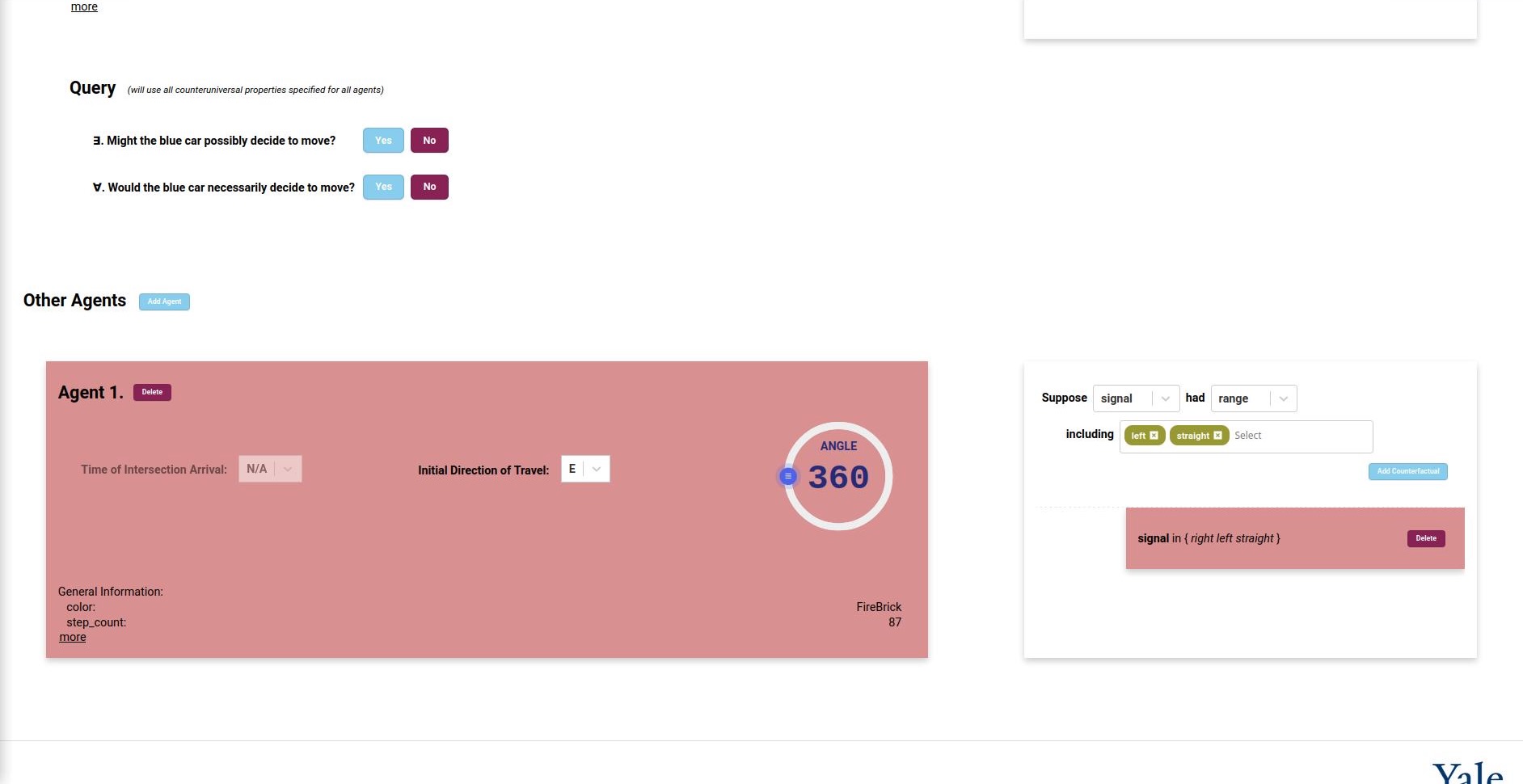}
    \end{subfigure}
    \vspace{-5mm}
    \caption{Still of the \soid{} GUI (with a small section cut out for brevity). At top left is the critical moment from the program logs as chosen by the investigator. At bottom right are the counterfactual conditions the investigator has specified.\vspace{3mm}}
    \label{fig:gui}
\end{figure*}

\begin{table*}[t]
\centering
\scriptsize
$$\begin{tabularx}{\textwidth}{c *{5}{Y}}
\toprule
     \multicolumn{2}{c|}{} & \multicolumn{3}{c|}{\vspace{1mm}\underline{timings (avg.~$n = 10$)}} & \multicolumn{1}{c}{}\\
    \textbf{model} & \textbf{output} & \textbf{symbolic ($s$)} & \textbf{solving ($s$)} & \textbf{total ($s$)} & \textbf{paths} \\[2pt]\midrule
    \fprop\rule{0pt}{2.5ex} & \multicolumn{5}{|l}{$\varphi_{fact}$, \textit{moved?}}\\\midrule
    standard & \cmark & 3.575 & 4.290e-03 & 4.162 & 1\\\midrule
    impatient & \cmark & 3.607 & 4.317e-03 & 4.193 & 1\\\midrule
    pathological & \cmark & 3.626 & 4.249e-03 & 4.212 & 1\\\midrule
    \cprop\rule{0pt}{2.5ex} & \multicolumn{5}{|l}{$\varphi^* \equiv \varphi_{fact}[(\texttt{agent1\_signal\_choice} = 2) \mapsto (\texttt{agent1\_signal\_choice} \in \{ 0, \, 1, \, 2 \})]$, \textit{always move?}}\\\midrule
    standard & \xmark & 3.979 & 2.371 & 7.754 & 3\\\midrule
    impatient & \cmark & 4.001 & 2.307 & 7.703 & 3\\\midrule
    pathological & \cmark & 3.958 & 2.326 & 7.681 & 3\\\midrule
    \cprop\rule{0pt}{2.5ex} & \multicolumn{5}{|l}{$\varphi^* \equiv \varphi_{fact}[(\texttt{agent1\_signal\_choice} = 2) \mapsto (\texttt{agent1\_signal\_choice} \in \{ 0, \, 1, \, 2 \})]$, \textit{ever not move?}}\\\midrule
    standard & \cmark & 4.015 & 2.428 & 7.849 & 3\\\midrule
    impatient & \xmark & 3.919 & 2.334 & 7.673 & 3\\\midrule
    pathological & \xmark & 3.966 & 2.352 & 7.718 & 3\\\midrule
    \cprop\rule{0pt}{2.5ex} & \multicolumn{5}{|l}{$\varphi^*[(\texttt{agent1\_pos\_x} = 1.376) \mapsto (1.0 \leq \texttt{agent1\_pos\_x} \leq 1.5) ]$, \textit{always move?}}\\\midrule
    standard & \xmark & 154.7 & 17.14 & 179.7 & 19\\\midrule
    impatient & \cmark & 207.6 & 4.622 & 220.1 & 19\\\midrule
    pathological & \xmark & 141.1 & 17.34 & 166.1 & 19\\\midrule
    \cprop\rule{0pt}{2.5ex} & \multicolumn{5}{|l}{$\varphi^*[(\texttt{agent1\_pos\_x} = 1.376) \mapsto (1.0 \leq \texttt{agent1\_pos\_x} \leq 1.5) ]$, \textit{ever not move?}}\\\midrule
    standard & \cmark & 133.0 & 54.40 & 195.2 & 19\\\midrule
    impatient & \xmark & 126.0 & 4.648 & 138.5 & 19\\\midrule
    pathological & \cmark & 254.2 & 17.47 & 279.5 & 19\\\midrule
    \cprop\rule{0pt}{2.5ex} & \multicolumn{5}{|l}{$\varphi^* \land (\texttt{agent2\_pos\_x} = 1.316) \land (\texttt{agent2\_pos\_z} = 0.378) \land \cdots$, \textit{always move?}}\\\midrule
    standard & \xmark & 8.995 & 4.111 & 16.74 & 3\\\midrule
    impatient & \cmark & 9.107 & 3.951 & 16.71 & 3\\\midrule
    pathological & \cmark & 9.037 & 3.913 & 16.54 & 3\\\midrule
    \cprop\rule{0pt}{2.5ex} & \multicolumn{5}{|l}{$\varphi^* \land (\texttt{agent2\_pos\_x} = 1.316) \land (\texttt{agent2\_pos\_z} = 0.378) \land \cdots$, \textit{ever not move?}}\\\midrule
    standard & \cmark & 8.483 & 4.029 & 16.33 & 3\\\midrule
    impatient & \xmark & 8.979 & 3.848 & 16.46 & 3\\\midrule
    pathological & \xmark & 9.087 & 3.941 & 16.70 & 3\\
\bottomrule
\end{tabularx}$$
\caption{Experimental results for our car crash case study.}   
\label{tbl:res}
\end{table*}

\section{The \soid{} Tool: Architecture and Case Studies}\label{sec:soid}

We implemented the counterfactual-guided logic exploration loop in our tool \soid{}, for \textbf{S}MT-based \textbf{O}racle for \textbf{I}nvestigating \textbf{D}ecisions.

The \soid{} tool is implemented in Python, and invokes the Z3 SMT solver~\cite{moura2008z3} for resolving queries. To begin, and outside of the scope of \soid{}, the investigator uses their knowledge of the harm under investigation to extract the factual trace $\tau^f$ from the logging infrastructure of $A$. 
\RB{Note that our tool assumes that both the $\tau^f$ and $A$ used in the analysis correspond to the real-world execution. Accountable logging~\cite{yoon2019adlp} and verifiable computation~\cite{parno2013pinocchio} can bolster confidence in these assumptions, and further that the program execution pathways being analyzed by \soid{} are those applicable in deployment and are not being manipulated by a `defeat device'~\cite{contag2017they}.} 
\RC{
At present \soid{} also assumes deterministic programs, though symbolic execution of randomized programs is an active area of formal methods research with developing tooling that could in the future be used to extend our method~\cite{susag2022symbolic}.
}

After extracting the trace the investigator specifies the (counter)factual query $\varphi(I)$ and behavior $\beta(\hat{D})$ using a Python library interface. Upon invocation, \soid{} symbolically executes $A$ to generate $\Pi(\hat{V})$. After the symbolic execution completes, $\soid{}$ formulates $\Phi$ as per Section~\ref{sec:rsandqs} and invokes Z3 to resolve the query. It then outputs the finding, as well as any model $\mathcal{M}$ in the event one exists due to a failed verification or successful counterfactual generation.

\paragraph{Research Questions.}\label{subsec:rqs}

We concretely evaluate \soid{} with respect to three quantitative research questions. They concern specific technical details of \soid{}'s efficiency:

\begin{itemize}
  \setlength\itemsep{0.5em}
    \item[RQ1)] Can \emph{ex post} analysis paired with symbolic execution to constrain to only feasible paths limit the consequences of the combinatorial explosion often present in verification tasks? 
\end{itemize}
More specifically, by logical duality, `will the car always move?' and `does there exist a scenario where the car does not move?' produce equivalent outputs. This is despite that the former is a universal `verification' query while the latter is an existential `counterfactual generation' query. So, we investigate whether `will the car always move?' is significantly more expensive to resolve than `does there exist a scenario where the car does not move?' for the same counterfactual family.
\begin{itemize}
  \setlength\itemsep{0.5em}
   \item[RQ2)] What is the relative additional cost of floating-point range constraints, given the importance of floating-point operations to statistical inference-based algorithmic decision making?
    \item[RQ3)] Do counterfactuals incorporating environmental conditions that are nonetheless irrelevant to agent decision making meaningfully increase the cost of a \soid{} execution, or does the symbolic execution successfully `ignore' them?
\end{itemize}

\subsection{Three Cars on the Stand: A Case Study}

In this section, we evaluate \soid{} on the crash example from Section~\ref{sec:motivating} (and Figure~\ref{fig:crash}). We pose and resolve the queries from the example:
\begin{enumerate}
 \item[at] $t_1^*$\\
    \vspace{-7mm}
    \begin{quote}
        $\counterfactual$ Could a different turn signal have led $\mathcal{A}$ to remain stationary? 
    \end{quote}
    \vspace{-2mm}
    \begin{quote}
        $\counterfactual$ If $\mathcal{A}$ had arrived before the other car, and that other car was not signaling a turn, would $\mathcal{A}$ have waited? (\emph{e.g.}, to `bait' the other car into passing in front of it?)  
    \end{quote}
\end{enumerate}
in a simulated driving environment, and show that \soid{} is able to produce \textsc{Facts} that distinguish between three different machine-learned self-driving car agents.

For our environment we employ Gym-Duckietown~\cite{gym_duckietown} with a simple intersection layout. A rendering of our example crash in our environment is given in Figure~\ref{fig:crash}. For our three agents, we used the same general C codebase, but used reinforcement learning -- specifically Q-learning~\cite{watkins1992q} -- to train three different versions of the decision model it invokes, each based on a different reward profile. Informally we deemed these reward profiles `standard', `impatient' and `pathological'. The `standard' profile is heavily penalized for crashing, but also rewarded for speed and not punished for moving without the right of way, so long as it is `safe'. The `impatient' profile is only rewarded for speed. The `pathological' profile is rewarded significantly for crashes, and minimally for speed to promote movement over nothing. The reward functions for these profiles are provided in Appendix~\ref{app:rewards}. The simulation environment is completely invisible to \soid{}, which only analyzes program executions on the basis of its code and logs. 

On top of Gym-Duckietown we designed and implemented a web GUI to enable non-expert interaction with \soid{}. GUIs that automatically generate representations of the driving environment are already deployed into semi-autonomous vehicles, such as those produced by Tesla. While simulating the environment, a drag-and-drop and button interface allows the user to manipulate the environment. by, \emph{e.g.}, introducing new cars, manipulating a car's position or angle, or changing a car's destination or which car possesses the right of way. After a factual trace plays out, a slider allows the investigator to select a step of the execution, before a drop-down and button interface allows specifying a counterfactual family and behavior (whether or not the car moved). We provide still images of the GUI's interfaces in Figure~\ref{fig:gui}.

\paragraph{Results.} The results of our benchmarks are summarized Table~\ref{tbl:res}. All statistics were gathered on an Intel Xeon CPU E5-2650 v3 @ 2.30GHz workstation with 64 GB of RAM. Each heading in Table~\ref{tbl:res} specifies a set of constraints $\varphi(\hat{I})$, and implicitly a behavior $\beta(\hat{D})$. The rows list the trained model invoked within $A$, the output of the evaluations, average timings, and the total number of feasible paths. Note that the \textbf{symbolic} and \textbf{solving} timings do not exactly sum to the \textbf{total} timing, due to some overhead. 

We find that \soid{} met each of our design goals on the way to successfully distinguishing the three cars, especially when enhanced by the GUI. The tool provides an interpretable and adaptive oracle allowing the investigator to query a sequence of counterfactuals without directly interacting with $A$ or the machine learned-model underlying it. Most of our queries resolved within $< 20s$, providing effective usability. The results of the queries demonstrate the distinctive behaviors expected of the three conflicting `purposes' as described in detail in \S\ref{sec:motivating}, allowing a thoughtful investigator to distinguish them as desired. With this data, we can also draw the following conclusions about our research questions.

\paragraph{RQ \#1.} We find that our universal and existential formulation took nearly identical time to resolve for all our evaluated families of counterfactuals. Moreover, we found that both positive and negative results for each query took near identical time as well. This finding highlights the benefit of \emph{ex post}, `human scale' analysis for combating combinatorial explosion. By limiting the scope of the queries to small, adaptively formulated relaxations, the set of program paths and the hardness of the resulting queries were kept manageable for the solver regardless of the nature of the query or its truth. This runs counter to the prevailing experience in \emph{ex ante} SMT-based program verification. 

\paragraph{RQ \#2.} We find the inclusion of a floating-point range query to notably increase the cost of solving, with a $\sim$6x increase in the number of feasible program paths and a $\sim$20-30x increase in the time required. As floating point ranges are a natural query constraint for machine learning-based systems, this increased in cost may be a significant limitation for practical deployments of \soid{}. Note that most of the increase in cost lies in symbolic execution ($\sim$50x increase) rather than solving ($\sim$10x increase). This encourages that advancements in floating point symbolic execution could greatly reduce this liability. The recent focus on SMT-based decision procedures for floating point/real-valued solving in the context of neural network verification~\cite{katz2017reluplex, katz2019marabou} could also be a source of relevant advances.

\paragraph{RQ \#3.} We find that \soid{} successfully `ignores' irrelevant information in the sense that no extraneous paths were explored, albeit at some increased cost in the solving of path constraint queries which were nonetheless enlarged by that information. Specifically, including a third car ($\texttt{agent2}$) outside of the intersection, but nonetheless present as an input to $A$, did not increase the number of paths found, but did increase the cost of solving by $\sim$2x. This is likely due to the increased size of the $\pi_i$ and $\Pi$ in the queried SMT formulas.

\captionsetup[figure]{format=hang}
\begin{figure*}[!t]
     \centering
     \begin{subfigure}[c]{0.48\textwidth}
          \centering
          \vspace{-50mm}   
          \includegraphics[width=0.9\textwidth]{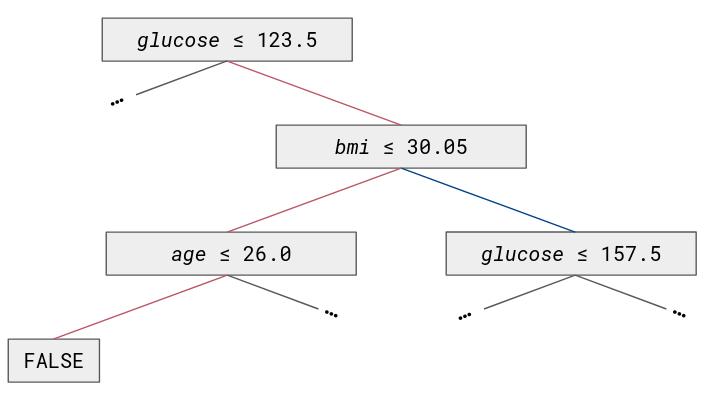}
     \end{subfigure}
     \begin{subfigure}[b]{0.48\textwidth}
          \centering
          \input{code/dt.tex}   
     \end{subfigure}
     \caption{Our decision tree example. At top, the relevant decision subtree for a misclassification based on health data, with the incorrect path taken in red -- and the correct branch missed in blue -- as the unit conversion bug leads to a significantly smaller BMI input than is correct. At bottom, the (otherwise correct) decision tree inference logic in C. \label{fig:dt}}
\end{figure*}

\subsection{Health Risk Decision Tree Misclassification}\label{app:dt}

\noindent To demonstrate that \soid{} is more general in application than to just cyberphysical systems, we also consider a second motivating example of incorrect statistical inference. We train a decision tree to infer the health risk status of individuals using the Pima Indians dataset, a classic example in counterfactuals due to~\cite{wachter2017counterfactual}. Notably, we consider a program $A$ with an implicit unit conversion bug: $A$ computes the BMI input to the decision tree using the height and weight parameters from its input. However, it is written to expect metric inputs in $kg$ and $m$, while the inputs are instead provided in the imperial $in$ and $lb$. This is a flaw of the software system in general. Both the decision tree and program themselves are correct, but end-to-end the system misclassifies many inputs, as for the same quantities $(kg/m^2) \gg (lb/in^2)$. 

Unlike statistical counterfactual methods like those of~\cite{wachter2017counterfactual,mothilal2020explaining} which only analyze the (correct) decision model, the end-to-end nature of \soid{} allows it to analyze everything, including the conversion bug. Figure~\ref{fig:dt} displays the inference code and incorrect decision due to the conversion error.

We then ran a small case study on this decision tree health risk misclassification example. The results of our benchmarks are summarized Table~\ref{tbl:sires}, and were gathered on the same Intel Xeon CPU E5-2650 v3 @ 2.30GHz workstation with 64 GB of RAM. In additional to a simple factual verification query as a baseline, we posed a single counterfactual query:
\begin{enumerate}
 \item[at] $t^*$\\
    \vspace{-7mm}
    \begin{quote}
        $\counterfactual$ Does there exist a weight for which the instance is classified as high risk? 
    \end{quote}
\end{enumerate}
The results show \soid{} is able to efficiently resolve the counterfactual in the positive. Of our empirical research questions only RQ \#2 is directly applicable.

\paragraph{RQ \#2.} We find the additional cost of the floating-point operations to still enable quick solving, even though those operations lead to SMT constraints incorporating exponentiation and division operations over floating-point operations, in addition to the (bounded-depth) recursion over floating-point comparisons needed for the decision tree inference itself.

\begin{table*}[t]
    \centering
    \scriptsize
    $$\begin{tabularx}{\textwidth}{c *{5}{Y}}
\toprule
  \multicolumn{2}{c|}{} & \multicolumn{3}{c|}{\vspace{1mm}\underline{timings (avg.~$n = 10$)}} & \multicolumn{1}{c}{}\\
    \textbf{model} & \textbf{output} & \textbf{symbolic ($s$)} & \textbf{solving ($s$)} & \textbf{total ($s$)} & \textbf{paths} \\[2pt]\midrule
    \fprop\rule{0pt}{2.5ex} & \multicolumn{5}{|l}{$\varphi_{fact}$, \textit{low risk?}}\\\midrule
    dt & \cmark & 0.746 & 4.896e-03 & 0.812 & 1\\\midrule
    \cprop\rule{0pt}{2.5ex} & \multicolumn{5}{|l}{$\varphi^* \equiv \varphi_{fact}[(\texttt{weight} = 249.973) \mapsto \top]$, \textit{ever high risk?}}\\\midrule
    dt & \cmark & 2.277 & 1.655 & 4.009 & 2\\
\bottomrule
\end{tabularx}$$
\caption{Experimental results for our decision tree misclassification case study.}   
\label{tbl:sires}
\end{table*}

\section{Conclusion}\label{sec:conclusion}

We conclude by highlighting some promising future directions for extending \acronym{}{} and \soid{} to both support more complex agents and make the investigatory process more intuitive.

\paragraph{Exploiting Counterfactual-Guided Abstraction Refinement.} As described, a promising direction is to add additional automation into \soid{} by using the abstraction refinement literature to generate informative representations of critical states and logic that recur across many queries. 

\paragraph{Supporting DNNs.} Many modern machine-learned agents rely on models built out of deep neural network (DNN) architectures. Extending \soid{} to support such agents -- most likely by relying on SMT-based neural network verifiers as subroutines~\cite{katz2017reluplex, katz2019marabou} -- is an important open direction for increasing the utility of our method and tools.

\paragraph{Alternative Generation of $\Pi(\hat{V})$.} Symbolic execution is no the only way to generate $\Pi(\hat{V})$. In particular, predicate transformer semantics~\cite{jhala2009software} provide another direction for generating decision logic representations compatible with SMT. Comparing the relative strengths of various methods may show that other approaches are superior, \emph{e.g.}, weakest precondition generation could be used to precompute a single large $\Pi(\hat{V})$ for all possible program executions, amortizing the cost of finding $\Pi(\hat{V})$ in exchange for an increase in the effort required to solve $\Phi$.

\paragraph{Programming Counterfactuals.} Although \soid{} is adaptive, that does not necessarily mean it needs to be interactive. A further possible direction would be to design a counterfactual calculus as the basis for a programming language that would invoke \soid{} to solve for elements of the \textsc{Facts} as part of the semantics. Such a language could potentially be the basis for formalizing certain legal regimes for which counterfactual analysis forms a critical component. A related direction would be to integrate with a scenario specification language like SCENIC~\cite{fremont2019scenic}, and the VerifAI project more generally~\cite{dreossi2019verifai}, to add another layer of capability onto the specification of families of counterfactual scenarios.

\section*{Acknowledgements} 

The authors thank Gideon Yaffe for many helpful conversations, Man-Ki Yoon for his assistance in implementing an earlier simulated driving environment, and Cristian Cadar and Daniel Liew for their guidance on successfully using Klee-Float for symbolic execution of our experiments. This work was supported by the Office of Naval Research (ONR) of the United States Department of Defense through a National Defense Science and Engineering Graduate (NDSEG) Fellowship, by the State Government of Styria, Austria – Department Zukunftsfonds Steiermark, by EPSRC grant no EP/R014604/1, and by NSF awards CCF-2131476, CCF-2106845, and CCF-2318974. The authors would also like to thank the Isaac Newton Institute for Mathematical Sciences, Cambridge, for support and hospitality during the programme Verified Software where work on this paper was undertaken.

\bibliographystyle{plain}
\bibliography{bibfile}

\appendix

\setlength{\belowcaptionskip}{-10pt}

\section{Additional Motivation}\label{app:motivate}

We consider a second motivating example of an \emph{ex post} investigation and analysis of a simulated self-driving car crash, as diagrammed in Figure~\ref{fig:me}. This example more thoroughly demonstrates how counterfactual analysis can clarify how an agent's decision making reacts to changes in the causal chain of events. For both cognitive and computational agents, traditional models of intention tie planning to commitment: if an agent intends an outcome, then it will only reconsider their behavior when that outcome becomes infeasible or counterproductive~\cite{bratman1987intention}. Counterfactuals then  allow querying and interpreting under what hypotheticals an agent can or will alter their behavior.

\captionsetup[figure]{format=hang}
\begin{figure}[!t]
     \centering
     \begin{subfigure}[t]{0.34\textwidth}
          \centering
          \resizebox{1\linewidth}{!}{\definecolor{qqttcc}{rgb}{0.59607, 0.5529, 0.7608}
\definecolor{ffffzz}{rgb}{0.902, 0.827, 0}
\definecolor{zzttqq}{rgb}{0.9451,0.6353,0.2510}
\definecolor{ffffff}{rgb}{1,1,1}
\definecolor{qqwuqq}{rgb}{0.949, 0.949, 0.949}
\begin{tikzpicture}[line cap=round,line join=round,>=triangle 45,x=1cm,y=1cm]
\clip(-4.584983792540159,-4.584983792540159) rectangle (4.584983792540159, 4.584983792540159);
\fill[line width=2pt,color=qqwuqq,fill=qqwuqq,fill opacity=0.4] (-12,12) -- (-12,-12) -- (12,-12) -- (12,12) -- cycle;
\fill[line width=2pt,fill=black,fill opacity=0.1] (-12,2.25) -- (-12,-2.25) -- (12,-2.25) -- (12,2.25) -- cycle;
\fill[line width=2pt,fill=black,fill opacity=0.1] (-2.25,12) -- (-2.25,-12) -- (2.25,-12) -- (2.25,12) -- cycle;
\draw [line width=2pt,color=qqwuqq] (-12,12)-- (-12,-12);
\draw [line width=2pt,color=qqwuqq] (-12,-12)-- (12,-12);
\draw [line width=2pt,color=qqwuqq] (12,-12)-- (12,12);
\draw [line width=2pt,color=qqwuqq] (12,12)-- (-12,12);
\draw [line width=2.4pt,dash pattern=on 1pt off 1pt,color=ffffff] (0,12)-- (0,-12);
\draw [line width=2.4pt,dash pattern=on 1pt off 1pt,color=ffffff] (-12,0)-- (12,0);
\draw [line width=1.2pt,color=ffffff] (-2.25,3)-- (2.25,3);
\draw [line width=1.2pt,color=ffffff] (2.25,2.25)-- (-2.25,2.25);
\draw [line width=1.2pt,color=ffffff] (-2.25,-2.25)-- (2.25,-2.25);
\draw [line width=1.2pt,color=ffffff] (2.25,-3)-- (-2.25,-3);
\draw [line width=1.2pt,color=ffffff] (-2.25,3)-- (2.25,2.25);
\draw [line width=1.2pt,color=ffffff] (2.25,3)-- (-2.25,2.25);
\draw [line width=1.2pt,color=ffffff] (-2.25,-2.25)-- (2.25,-3);
\draw [line width=1.2pt,color=ffffff] (-2.25,-3)-- (2.25,-2.25);
\draw [line width=1.2pt,color=ffffff] (-2.25,2.25)-- (-3,-2.25);
\draw [line width=1.2pt,color=ffffff] (-2.25,-2.25)-- (-3,2.25);
\draw [line width=1.2pt,color=ffffff] (-2.25,2.25)-- (-2.25,-2.25);
\draw [line width=1.2pt,color=ffffff] (2.25,-2.25)-- (2.25,2.25);
\draw [line width=1.2pt,color=ffffff] (2.25,2.25)-- (3,-2.25);
\draw [line width=1.2pt,color=ffffff] (3,-2.25)-- (3,2.25);
\draw [line width=1.2pt,color=ffffff] (3,2.25)-- (2.25,-2.25);
\draw [shift={(-2.37025,-1.13)},line width=1.2pt,color=zzttqq,fill=zzttqq,pattern=north east lines,pattern color=zzttqq]  (0,0) --  plot[domain=-1.5445036858989027:1.5445036858989023,variable=\t]({1*0.7702662283262843*cos(\t r)+0*0.7702662283262843*sin(\t r)},{0*0.7702662283262843*cos(\t r)+1*0.7702662283262843*sin(\t r)}) -- cycle ;
\draw [line width=2pt,dash pattern=on 1pt off 1pt,color=zzttqq] (-12,-1.13)-- (-2.35,-1.13);
\draw [line width=1.2pt,color=ffffff] (-3,2.25)-- (-3,-2.25);
\draw [shift={(1.13,-4.020266666666667)},line width=1.2pt,color=qqttcc,fill=qqttcc,pattern=north east lines,pattern color=qqttcc]  (0,0) --  plot[domain=0.026314270945512044:3.115278382644281,variable=\t]({1*0.7702666666666665*cos(\t r)+0*0.7702666666666665*sin(\t r)},{0*0.7702666666666665*cos(\t r)+1*0.7702666666666665*sin(\t r)}) -- cycle ;
\draw [line width=2pt,dash pattern=on 1pt off 1pt,color=qqttcc] (1.13,-12)-- (1.13,-4);
\draw [shift={(-0.020266666666666655,-1.13)},line width=1.2pt,color=zzttqq,fill=zzttqq,fill opacity=0.56]  (0,0) --  plot[domain=-1.544482055849385:1.5444820558493852,variable=\t]({1*0.7702666666666668*cos(\t r)+0*0.7702666666666668*sin(\t r)},{0*0.7702666666666668*cos(\t r)+1*0.7702666666666668*sin(\t r)}) -- cycle ;
\draw [line width=2pt,dash pattern=on 1pt off 1pt,color=zzttqq] (-1.6,-1.125)-- (0,-1.13);
\draw [shift={(1.13,-2.1702666666666666)},line width=1.2pt,color=qqttcc,fill=qqttcc,fill opacity=0.5]  (0,0) --  plot[domain=0.026314270945511475:3.1152783826442816,variable=\t]({1*0.7702666666666664*cos(\t r)+0*0.7702666666666664*sin(\t r)},{0*0.7702666666666664*cos(\t r)+1*0.7702666666666664*sin(\t r)}) -- cycle ;
\draw [line width=2.4pt,dash pattern=on 1pt off 1pt,color=qqttcc] (1.13,-2.15)-- (1.13,-3.25);
\begin{scriptsize}
\draw [fill=ffffzz] (-1.6,-1.515) ++(-4.5pt,0 pt) -- ++(4.5pt,4.5pt)--++(4.5pt,-4.5pt)--++(-4.5pt,-4.5pt)--++(-4.5pt,4.5pt);
\end{scriptsize}
\end{tikzpicture}}
     \end{subfigure}
     \begin{subfigure}[t]{0.35\textwidth}
          \centering
          \resizebox{1\linewidth}{!}{\definecolor{ffffzz}{rgb}{0.902, 0.827, 0}
\definecolor{zzttqq}{rgb}{0.9451,0.6353,0.2510}
\definecolor{qqttcc}{rgb}{0.59607, 0.5529, 0.7608}
\definecolor{qqwwtt}{rgb}{0.553, 0.761, 0.596}
\definecolor{yqqqyq}{rgb}{0.761, 0.596, 0.553}
\definecolor{ffffff}{rgb}{1,1,1}
\definecolor{qqwuqq}{rgb}{0.949, 0.949, 0.949}
\begin{tikzpicture}[line cap=round,line join=round,>=triangle 45,x=1cm,y=1cm]
\clip(-4.584983792540159,-4.584983792540159) rectangle (4.584983792540159, 4.584983792540159);
\fill[line width=2pt,color=qqwuqq,fill=qqwuqq,fill opacity=0.4] (-12,12) -- (-12,-12) -- (12,-12) -- (12,12) -- cycle;
\fill[line width=2pt,fill=black,fill opacity=0.1] (-12,2.25) -- (-12,-2.25) -- (12,-2.25) -- (12,2.25) -- cycle;
\fill[line width=2pt,fill=black,fill opacity=0.1] (-2.25,12) -- (-2.25,-12) -- (2.25,-12) -- (2.25,12) -- cycle;
\fill[line width=2pt,color=yqqqyq,fill=yqqqyq,fill opacity=0.11] (3,3) -- (-3,3) -- (3,-3) -- cycle;
\draw [line width=2pt,color=qqwuqq] (-12,12)-- (-12,-12);
\draw [line width=2pt,color=qqwuqq] (-12,-12)-- (12,-12);
\draw [line width=2pt,color=qqwuqq] (12,-12)-- (12,12);
\draw [line width=2pt,color=qqwuqq] (12,12)-- (-12,12);
\draw [line width=2.4pt,dash pattern=on 1pt off 1pt,color=ffffff] (0,12)-- (0,-12);
\draw [line width=2.4pt,dash pattern=on 1pt off 1pt,color=ffffff] (-12,0)-- (12,0);
\draw [line width=1.2pt,color=ffffff] (-2.25,3)-- (2.25,3);
\draw [line width=1.2pt,color=ffffff] (2.25,2.25)-- (-2.25,2.25);
\draw [line width=1.2pt,color=ffffff] (-2.25,-2.25)-- (2.25,-2.25);
\draw [line width=1.2pt,color=ffffff] (2.25,-3)-- (-2.25,-3);
\draw [line width=1.2pt,color=ffffff] (-2.25,3)-- (2.25,2.25);
\draw [line width=1.2pt,color=ffffff] (2.25,3)-- (-2.25,2.25);
\draw [line width=1.2pt,color=ffffff] (-2.25,-2.25)-- (2.25,-3);
\draw [line width=1.2pt,color=ffffff] (-2.25,-3)-- (2.25,-2.25);
\draw [line width=1.2pt,color=ffffff] (-3,2.25)-- (-3,-2.25);
\draw [line width=1.2pt,color=ffffff] (-2.25,2.25)-- (-3,-2.25);
\draw [line width=1.2pt,color=ffffff] (-2.25,-2.25)-- (-3,2.25);
\draw [line width=1.2pt,color=ffffff] (-2.25,2.25)-- (-2.25,-2.25);
\draw [line width=1.2pt,color=ffffff] (2.25,-2.25)-- (2.25,2.25);
\draw [line width=1.2pt,color=ffffff] (2.25,2.25)-- (3,-2.25);
\draw [line width=1.2pt,color=ffffff] (3,-2.25)-- (3,2.25);
\draw [line width=1.2pt,color=ffffff] (3,2.25)-- (2.25,-2.25);
\draw [shift={(1.13,-1.2702666666666669)},line width=1.2pt,color=qqttcc,fill=qqttcc,pattern=north east lines,pattern color=qqttcc]  (0,0) --  plot[domain=0.026314270945511763:3.1152783826442816,variable=\t]({1*0.7702666666666668*cos(\t r)+0*0.7702666666666668*sin(\t r)},{0*0.7702666666666668*cos(\t r)+1*0.7702666666666668*sin(\t r)}) -- cycle ;
\draw [line width=2pt,dash pattern=on 1pt off 1pt,color=qqttcc] (1.13,-1.25)-- (1.13,-12);
\draw [shift={(-0.8,1.7702666666666667)},line width=2pt,color=zzttqq,fill=zzttqq,fill opacity=0.57]  (0,0) --  plot[domain=3.1679069245353046:6.256871036234075,variable=\t]({1*0.7702666666666664*cos(\t r)+0*0.7702666666666664*sin(\t r)},{0*0.7702666666666664*cos(\t r)+1*0.7702666666666664*sin(\t r)}) -- cycle ;
\draw [shift={(1.13,1.8297333333333339)},line width=1.2pt,color=qqttcc,fill=qqttcc,fill opacity=0.57]  (0,0) --  plot[domain=0.026314270945510896:3.1152783826442825,variable=\t]({1*0.7702666666666668*cos(\t r)+0*0.7702666666666668*sin(\t r)},{0*0.7702666666666668*cos(\t r)+1*0.7702666666666668*sin(\t r)}) -- cycle ;
\draw [line width=2.4pt,dotted,color=qqwwtt] (-0.5,2.425)-- (0.52,2.43);
\draw [line width=2pt,dash pattern=on 1pt off 1pt,color=qqttcc] (1.13,1.85)-- (1.13,-0.5);
\begin{scriptsize}
\draw [color=yqqqyq] (3,3)-- ++(-4.5pt,0 pt) -- ++(9pt,0 pt) ++(-4.5pt,-4.5pt) -- ++(0 pt,9pt);
\draw [fill=qqwwtt,shift={(-0.5,2.425)},rotate=270] (0,0) ++(0 pt,6.75pt) -- ++(5.84567147554496pt,-10.125pt)--++(-11.69134295108992pt,0 pt) -- ++(5.84567147554496pt,10.125pt);
\draw [fill=ffffzz] (-0.415,1) ++(-4.5pt,0 pt) -- ++(4.5pt,4.5pt)--++(4.5pt,-4.5pt)--++(-4.5pt,-4.5pt)--++(-4.5pt,4.5pt);
\draw [fill=qqwwtt,shift={(0.52,2.43)},rotate=270] (0,0) ++(0 pt,6.75pt) -- ++(5.84567147554496pt,-10.125pt)--++(-11.69134295108992pt,0 pt) -- ++(5.84567147554496pt,10.125pt);
\end{scriptsize}
\end{tikzpicture}}  
     \end{subfigure}
     \caption{A pair of self-driving car crash scenarios.\newline(L): \emph{Ex. 1}. A broadside crash between an agent car $\mathcal{A}$ (blue semicircle) and another car (orange semicircle). See Figure~\ref{fig:crash} for a rendering of this scenario in our GUI.\newline(R): \emph{Ex. 2}. An agent car $\mathcal{A}$ (blue semicircle) hits a pedestrian (green triangle) in an environment with an obscuring truck (orange semicircle) and an automated intersection (red cross). \label{fig:me}}
\end{figure}

Suppose a preliminary investigation establishes the following facts regarding a crash: as $\mathcal{A}$ (blue semicircle) navigated the intersection, a pedestrian (green triangle) walked against the signal. At time $t^*_1$ (striped $\mathcal{A}$) the pedestrian was obscured from $\mathcal{A}$ by the turning truck (orange semicircle). At that time an intersection manager (or IM, red cross) misclassified the pedestrian as a cyclist, and accordingly sent a misleading warning to $\mathcal{A}$. At some $t^*_2$, $\mathcal{A}$'s own sensors acquired and correctly identified the pedestrian, but the car was unable avoid the crash at time $t^*_3$ (solid $\mathcal{A}$).\footnote{Related scenarios appear in marketing materials for self-driving car technology~\cite{continental}.} 

Consider the following questions:
\begin{enumerate}
    \item[at] $t_1^*$\\
    \vspace{-8mm}
    \begin{quote}
        $\to$ Did the receipt of the IM's message lead to a decision to brake?
    \end{quote}
    \vspace{-2mm}
    \begin{quote}
        $\counterfactual$ Would a pedestrian warning have lead to a decision to (more sharply) brake?
    \end{quote}
    \item[at] $t_2^*$\\
    \vspace{-8mm}
    \begin{quote}
        $\to$ Did the observation of the pedestrian lead to a decision to (more sharply) brake?
    \end{quote}
    \vspace{-2mm}
    \begin{quote}
        $\counterfactual$ If $\mathcal{A}$'s sensors also misidentified the pedestrian as a cyclist, would that have led to a decision to (less sharply) brake?
    \end{quote}
\end{enumerate}
Note again their adaptive nature: the exact formulation of the later questions depend on the answer to the first. These queries explore under what conditions the car will reconsider transiting the intersection, and the relative responsibilities of the three agents. The answers (\emph{yes}, \emph{yes}, \emph{yes}, \emph{no}) support the IM being significantly culpable through its misidentification, as the agent's decisions are consistent with believing it braked sufficiently to avoid a (faster moving) cyclist. In contrast, (\emph{no}, \emph{no}, \emph{yes}, \emph{no}) support $\mathcal{A}$ as uniquely culpable among the ADMs, as the misidentification by the IM had limited influence on its decisions.

\section{Reward Functions}\label{app:rewards}

We document the reward profiles we used to train the `standard', `impatient', and `pathological' cars respectively, up to some minor edits for clarity. Note that in order to cleanly test our intersection dynamics every car attempts to avoid tailgating, in order to reduce spurious collisions on entry.

\begin{figure}[h]
    \centering
    \input{code/standard.tex}
    \caption{The `standard' reward profile.}
    \label{fig:standard}
\end{figure}
\begin{figure}[h]
    \centering
    \input{code/impatient.tex}
    \caption{The `impatient' reward profile.}
    \label{fig:impatient}
\end{figure}
\begin{figure}[h]
    \centering
    \input{code/pathological.tex}
    \caption{The `pathological' reward profile.}
    \label{fig:pathological}
\end{figure}

\end{document}